\newcommand{\stc}[1]{\text{stc}(#1)}
\newcommand{\cng}[2]{\text{cng}_{#1}(#2)}
\newcommand{\cross}[2]{\partial_{#1}(#2)}
\newcommand{\dw}[2]{#1\!:\!#2}
\newcommand{\problemSTC}{{\textsf{STC}}}
\DeclareMathSymbol{\shortminus}{\mathbin}{AMSa}{"39}
\newcommand{\problemKSTC}[1]{{#1\!\shortminus\!\textsf{STC}}}
\newcommand{\problemDSTC}[1]{{\textsf{STC}#1}}
\newcommand{\problemKDSTC}[2]{{#1\!\shortminus\!\textsf{STC}#2}}
\newcommand{\degree}[2]{\text{deg}_{#1}(#2)}
\newcommand{\neighbor}[2]{N_{#1}(#2)}
\newcommand{\eccentricity}[2]{\text{ecc}_{#1}(#2)}
\newcommand{\radius}[1]{\text{rad}(#1)}
\newcommand{\capacity}[2]{c({#1}, {#2})}
\newcommand{\flow}[2]{f({#1}, {#2})}
\newcommand{\mareksmargincomment}[1]%
    {{%
      \marginpar{{\tiny\begin{minipage}{0.5in}
                       \begin{flushleft}
                          {\color{red}MCh} {#1}
                       \end{flushleft}
                       \end{minipage}
                }}
    }}
\newcommand{\hsmargincomment}[1]%
    {{%
      \marginpar{{\tiny\begin{minipage}{0.5in}
                       \begin{flushleft}
                          {\color{blue}H} {#1}
                       \end{flushleft}
                       \end{minipage}
                }}
    }}
\newcommand{\ignore}[1]{}
\newcommand{\etal}{{\emph{et~al.}}}
\newcommand{\braced}[1]{{ \left\{ #1 \right\} }}
\newcommand{\floor}[1]{\lfloor#1\rfloor}
\newcommand{\barx}{{\bar x}}
  \newcommand{\tildec}{{\tilde{c}}}  
\newcommand{\tildex}{{\tilde{x}}}
\newcommand{\myparagraph}[1]{{\smallskip\noindent\textbf{#1}.}}
\newcommand{\mycase}[1]{{\underline{Case~#1}:}}
\newcommand{\PP}{{\mathbb{P}}}
\newcommand{\NP}{{\mathbb{NP}}}
\title{Better Hardness Results for the Minimum Spanning Tree Congestion Problem\footnote{Research partially supported by National Science Foundation grant CCF-2153723.}}
\author{Huong Luu}{Department of Computer Science\\ University of California at Riverside}{}{}{}
\author{Marek Chrobak}{Department of Computer Science\\ University of California at Riverside}{}{}{}
\authorrunning{H.~Luu and M.~Chrobak}
\keywords{Combinatorial optimization, Spanning trees, Congestion}
\begin{document}
\nolinenumbers
\maketitle

\keywords{Combinatorial optimization, Spanning trees, Congestion}

\abstract{In the spanning tree congestion problem, given a connected graph $G$, the objective is to
compute a spanning tree $T$ in $G$ that minimizes its maximum edge congestion,
where the congestion of an edge $e$ of $T$ is the number of 
edges in $G$ for which the unique path in $T$ between their endpoints traverses $e$.
The problem is known to be $\NP$-hard, but its approximability is still poorly
understood, and it is not even known whether the optimum solution can be efficiently approximated
with ratio $o(n)$. In the decision version of this problem, denoted $\problemKSTC{K}$,
we need to determine if $G$ has a spanning tree with congestion at most $K$.
It is known that $\problemKSTC{K}$ is $\NP$-complete for $K\ge 8$, and this implies 
a lower bound of $1.125$ on the approximation ratio of minimizing congestion.
On the other hand, $\problemKSTC{3}$ can be solved in polynomial time, with
the complexity status of this problem for $K\in \braced{4,5,6,7}$ remaining an open problem.
We substantially improve the earlier hardness results by proving that
$\problemKSTC{K}$ is $\NP$-complete for $K\ge 5$. This leaves only the case $K=4$
open, and improves the lower bound on the approximation ratio to $1.2$.

Motivated by evidence that minimizing congestion is hard even for graphs
of small constant radius, we consider $\problemKSTC{K}$ restricted to graphs of
radius $2$, and we prove that this variant is $\NP$-complete for all $K\ge 6$.
Exploring further in this direction, we also 
examine the variant, denoted $\problemKDSTC{K}{D}$, where the objective is to 
determine if the graph has a depth-$D$ spanning three of congestion at most $K$.
We prove that $\problemKDSTC{6}{2}$ is $\NP$-complete even for bipartite graphs.
For bipartite graphs we establish a tight bound, by also proving 
that $\problemKDSTC{5}{2}$ is polynomial-time solvable.
Additionally, we complement this result with polynomial-time algorithms for two 
special cases that involve bipartite graphs and restrictions on vertex degrees.
}


\maketitle

\section{Introduction}
\label{sec: introduction}



Problems involving constructing a spanning tree that satisfies certain requirements are among the most fundamental tasks in graph theory and algorithmics. 
One such problem is the \emph{spanning tree congestion problem}, $\problemSTC$ for short, that has been studied extensively for many years. 
In this problem we seek a spanning tree $T$ of a given graph $G$ that roughly approximates the connectivity structure of $G$, in the following
sense: Embed $G$ into $T$ by replacing each edge $(u,v)$ of $G$ by the unique $u$-to-$v$ path in $T$. Define the \emph{congestion of an edge
$e$ of $T$} as the number of such paths that traverse $e$. The objective of $\problemSTC$ is to find a spanning tree $T$ in which
the maximum edge congestion is minimized.

The general concept of edge congestion was first introduced in 1986, under the name of \emph{load factor}, as a measure of quality of an embedding of
one graph into another~\cite{sandeep_1986_optimal_tree_machines} (see also the survey in~\cite{rosenberg_1988_graph_embeddings}).
The problem of computing trees with low congestion was studied by Khuller~{\etal}~\cite{khuller_1993_designing_multi_commodity_flow_tree} 
in the context of solving commodities network routing problems. The trees
considered there were not required to be spanning subtrees, but the variant involving spanning trees was also mentioned.
In 2003, Ostrovskii provided independently a formal definition of $\problemSTC$ and established 
some fundamental properties of spanning trees with low congestion~\cite{ostrovoskii_2004_minimal_congestion_tree}.  
Since then, many combinatorial and algorithmic results about this problem have been reported in the literature
--- we refer the readers to the survey paper by Otachi~\cite{otachi_2020_survey_spanning_tree_congestion} for more
information, most of which is still up-to-date. 

As established by L{\"o}wenstein~\cite{lowenstein_2010_in_the_complement_dominating_set}, $\problemSTC$ is $\NP$-hard. 
As usual, this is proved by showing $\NP$-completeness of its decision version, where we are given a graph $G$ and an integer $K$, 
and we need to determine if $G$ has a spanning tree with congestion at most $K$. 
Otachi~{\etal}~\cite{otachi_2010_complexity_result_stc} strengthened this by proving that the problem remains $\NP$-hard even for planar graphs.
 In~\cite{okamoto_2011_hardness_results_exp_algorithm_stc}, $\problemSTC$ is proven to be $\NP$-hard for 
chain graphs and split graphs. 
On the other hand, computing optimal solutions for $\problemSTC$ can be achieved in
polynomial time for some special classes of graphs: complete $k$-partite graphs, two-dimensional tori~\cite{kozawa_2009_stc_graphs}, 
outerplanar graphs~\cite{bodlaender_2011_stc_k-outerplanargraphs}, and two-dimensional Hamming graphs~\cite{kozawa_2011_stc_rook_graphs}. 

In our paper, we focus on the decision version of $\problemSTC$ where the bound $K$ on congestion is a fixed constant. 
We denote this variant by $\problemKSTC{K}$. 
Several results on the complexity of $\problemKSTC{K}$ were reported in~\cite{otachi_2010_complexity_result_stc}.
For example, the authors of~\cite{otachi_2010_complexity_result_stc}
show that $\problemKSTC{K}$ is decidable in linear time for planar graphs, graphs of bounded treewidth, graphs of bounded degree, and for all graphs when $K=1,2,3$. On the other hand, they show that the problem is $\NP$-complete for any fixed $K \ge 10$. 
In~\cite{bodlaender_2012_parameterized_complexity_stc}, Bodlaender~\etal~proved that $\problemKSTC{K}$ is linear-time solvable for graphs in apex-minor-free families and chordal graphs. 
They also show an improved hardness result of $\problemKSTC{K}$, namely that it is $\NP$-complete for $K \ge 8$, even in the special case of apex graphs that only have one unbounded degree vertex. 
As stated in~\cite{otachi_2020_survey_spanning_tree_congestion}, the complexity status of $\problemKSTC{K}$ for $K \in\braced{4,5,6,7}$ remains an open problem.

Very little is known about the approximability of $\problemSTC$.
The trivial upper bound for the approximation ratio is $n/2$ ---
this ratio is achieved in fact by \emph{any} spanning tree~\cite{otachi_2020_survey_spanning_tree_congestion}.
As a direct consequence of the $\NP$-completeness of  $\problemKSTC{8}$,
there is no polynomial-time algorithm to approximate the optimum spanning tree congestion with a ratio better than $1.125$  (unless $\PP = \NP$).  


\myparagraph{Our contributions} 
In this paper, 
addressing an open question in~\cite{otachi_2020_survey_spanning_tree_congestion}, we provide an improved hardness result for $\problemKSTC{K}$:

\smallskip

\begin{theorem}\label{thm: np-hardness}
For any fixed integer $K \ge 5$,  $\problemKSTC{K}$ is  $\NP$-complete. 
\end{theorem}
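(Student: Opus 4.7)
The plan is to reduce from a suitable $\NP$-complete variant of 3SAT (for instance, 3SAT with bounded variable occurrences) to $\problemKSTC{5}$; once the case $K=5$ is handled, the cases $K\ge 6$ follow by a uniform padding of the construction that shifts every spanning tree's congestion by the same amount. Given an input formula $\phi$, the target is a graph $G_\phi$ whose minimum spanning tree congestion is $5$ when $\phi$ is satisfiable and at least $6$ otherwise. Membership of $\problemKSTC{K}$ in $\NP$ is immediate, since a spanning tree is a polynomial-size certificate and its congestion is polynomial-time computable.

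The heart of the construction is a pair of tight gadgets. For each variable $x_i$ I would design a variable gadget that, under any spanning tree $T$ of $G_\phi$ of congestion at most $5$, admits essentially two locally canonical subtrees corresponding to the two truth values of $x_i$; each canonical choice must saturate its internal edges at congestion exactly $5$, so that any deviation pushes some edge to congestion $6$. For each clause $c_j$ I would design a clause gadget connected to the relevant variable gadgets by ``literal wires'', tuned so that the contribution of each literal to a distinguished edge of the clause gadget reflects its truth value, with the edge's congestion staying within budget if and only if at least one of the three literals is true under the selected assignment. The reduction is then verified in the standard two directions: a satisfying assignment yields a spanning tree of global congestion exactly $5$ by selecting the corresponding canonical subtree inside each variable gadget and routing each clause through a satisfied literal, while any spanning tree of congestion at most $5$ yields a satisfying assignment through the rigidity of the variable and clause gadgets.

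The main obstacle is calibrating the gadgets so that a congestion budget of exactly $5$ is both necessary and sufficient. The earlier proofs for $K\ge 8$ and $K\ge 10$ could tolerate several units of slack in their gadgets, but to hit $K=5$ every edge of a valid configuration must be simultaneously tight at $5$ while every alternative configuration overshoots by at least one. I expect the bulk of the effort to go into discovering small, sufficiently symmetric graphs with enough rigidity to serve as variable gadgets, and then into designing compatible clause gadgets and literal wires that interface with them without introducing any extra slack. Since $K=4$ is left open in the theorem, the calibration should also be sensitive enough that an analogous construction for budget $4$ cannot be completed --- a useful guardrail while searching for the gadgets.
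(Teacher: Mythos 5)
Your proposal is a plan rather than a proof: every load-bearing step is deferred (``I would design a variable gadget that\dots'', ``I expect the bulk of the effort to go into discovering\dots''), and the entire mathematical content of the theorem lives exactly in the parts you have not supplied. What is missing is not routine verification but the actual discovery: a concrete variable gadget, concrete clause attachments, concrete edge multiplicities, and the case analysis showing that any spanning tree of congestion at most $K$ is forced into one of the canonical configurations. The paper resolves the calibration problem you correctly identify as the obstacle using three specific devices that your sketch does not anticipate: (i) a reduction from (2P1N)-SAT rather than generic bounded-occurrence 3SAT, so that each variable gadget has exactly two positive and one negative literal vertex and can be made small and asymmetric; (ii) a ``double weight'' device (an edge contributing $\omega$ to other edges' congestion when it is a non-tree edge but $\omega'$ to its own congestion when it is a tree edge, simulated by subdividing the edge), which is what makes tight calibration at small $K$ possible at all; and (iii) a single shared root vertex $r$ to which all variable gadgets and all two-literal clauses attach, with weights parameterized as $k_i=K-i$. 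Without some such devices there is no evidence the gadgets you postulate exist, so the proposal cannot be credited as a proof.

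Two further points. First, your reduction of the case $K\ge 6$ to $K=5$ by ``uniform padding that shifts every spanning tree's congestion by the same amount'' is asserted without justification and is not obviously achievable: congestion is a per-edge quantity depending on which edges end up in the unknown tree, and the known scaling trick (integer edge weights) multiplies congestion rather than adding to it. The paper sidesteps this entirely by building the weights from $K$ directly, so one construction covers all $K\ge 5$. Second, your framing that ``every edge of a valid configuration must be simultaneously tight at $5$'' is stronger than necessary and stronger than what the paper's gadgets achieve --- several edges in the canonical trees have congestion strictly below $K$; what matters is only that every deviation from the intended structure forces \emph{some} edge above $K$, which is established there by a chain of structural claims about which gadget edges must belong to $T$.
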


\smallskip

The proof of this theorem is given in Section~\ref{sec: k-stc hardness proof}.
Combined with the results in~\cite{otachi_2010_complexity_result_stc}, Theorem~\ref{thm: np-hardness}
leaves only the status of $\problemKSTC{4}$ open. 
Furthermore, it also immediately improves the lower bound on the approximation ratio for $\problemSTC$:

\smallskip

\begin{corollary}
For $c < 1.2$ there is no polynomial-time $c$-approximation algorithm for $\problemSTC$, unless $\PP=\NP$.
\end{corollary}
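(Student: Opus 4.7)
The plan is a standard gap-based reduction from $\problemKSTC{5}$ to approximating $\problemSTC$, leveraging the integrality of the congestion value together with Theorem~\ref{thm: np-hardness}. Suppose toward contradiction that for some $c < 1.2$ there exists a polynomial-time $c$-approximation algorithm $\calA$ for $\problemSTC$. I will use $\calA$ to decide $\problemKSTC{5}$ in polynomial time, contradicting Theorem~\ref{thm: np-hardness}.

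Given an instance $G$ of $\problemKSTC{5}$, run $\calA$ on $G$ and let $T$ be the spanning tree it produces, with maximum edge congestion $\kappa(T)$. Let $\mathrm{OPT}(G)$ denote the optimal (minimum) spanning tree congestion of $G$. By the approximation guarantee, $\kappa(T) \le c \cdot \mathrm{OPT}(G)$. If $\mathrm{OPT}(G) \le 5$, then $\kappa(T) \le 5c < 5 \cdot 1.2 = 6$, and since congestion is a nonnegative integer this forces $\kappa(T) \le 5$. Conversely, if $\mathrm{OPT}(G) \ge 6$, then $\kappa(T) \ge \mathrm{OPT}(G) \ge 6$ as no spanning tree can beat the optimum. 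Thus the test ``$\kappa(T) \le 5$?'' correctly distinguishes YES-instances of $\problemKSTC{5}$ (where $\mathrm{OPT}(G) \le 5$) from NO-instances (where $\mathrm{OPT}(G) \ge 6$), and this test runs in polynomial time.

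This yields a polynomial-time decision procedure for $\problemKSTC{5}$, which by Theorem~\ref{thm: np-hardness} is $\NP$-complete. Hence $\PP = \NP$, contradicting the hypothesis, and the corollary follows.

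The argument is routine once Theorem~\ref{thm: np-hardness} is in hand; the only subtlety worth flagging is the use of integrality of congestion to upgrade $\kappa(T) < 6$ to $\kappa(T) \le 5$, which is exactly what makes the threshold $6/5 = 1.2$ tight for a hardness gap between congestion values $5$ and $6$. No further combinatorial work is required.
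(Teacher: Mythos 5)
Your argument is correct and is exactly the standard gap argument the paper relies on (implicitly, since the corollary is stated as an immediate consequence of Theorem~\ref{thm: np-hardness}): a $c$-approximation with $c<6/5$ would separate instances with optimum at most $5$ from those with optimum at least $6$, deciding $\problemKSTC{5}$. No differences worth noting.
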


\smallskip

We remark that this hardness result remains valid even if an additive constant is allowed in the
approximation bound. This follows by an argument in~\cite{bodlaender_2012_parameterized_complexity_stc}.
(In essence, the reason is that assigning a positive integer weight $\beta$ to each edge increases its congestion 
by a factor $\beta$.)

\smallskip
A common feature of the hardness proofs for $\problemSTC$, including ours, is that they all
use graphs of small constant radius (or, equivalently, diameter). Another property of $\problemSTC$
that makes its approximation challenging is that the minimum congestion value is not monotone with respect to
adding edges. The example graph in~\cite{ostrovoskii_2004_minimal_congestion_tree} showing this non-monotonicity
is also of small radius (in fact, only $2$). These observations indicate that
a key to further progress may be in better understanding of  $\problemSTC$ in small-radius graphs.

This motivates our additional hardness result presented in Section~\ref{sec: k-stc hardness bipartite graph radius 2}, where we focus
on graphs of radius $2$. (For radius $1$ the problem is trivial.)
We prove there that $\problemKSTC{K}$ remains $\NP$-complete for this class of graphs, for any fixed integer $K \ge 6$.
In fact, this holds even if we further restrict such graphs to be bipartite and have only one vertex of non-constant degree.

Probing further in this direction, in Section~\ref{sec: k-dstc-2} we consider the variant of 
$\problemSTC$ denoted $\problemKDSTC{K}{D}$, in which the objective is to determine if the graph has a 
spanning tree of depth at most $D$ and congestion at most $K$. Note that this is not a restriction of $\problemSTC$,
as the minimum congestion for trees of depth $2$ can be larger than then optimum value of $\problemSTC$.
We observe that our $\NP$-completeness proof in Section~\ref{sec: k-stc hardness bipartite graph radius 2}
can be adapted to prove that $\problemKDSTC{K}{2}$ is $\NP$-complete for $K\ge 6$. 
This is true even if input graphs are restricted to bipartite graphs with only one vertex of non-constant degree.
For bipartite graphs, we establish a tight bound by proving that $\problemKDSTC{5}{2}$ is polynomial-time solvable.
Complementing this, we present two other natural special cases, involving bipartite graphs and restrictions on vertex degrees,
in which the optimal congestion spanning tree can be computed in polynomial time.


\myparagraph{Other related work} 
The spanning tree congestion problem is closely related to the tree spanner problem, in which the objective is to find
 a spanning tree $T$ of $G$ that minimizes the stretch factor, defined as the maximum ratio, over all vertex pairs,
 between the length of the path in $T$ and the length of the shortest path in $G$ connecting these vertices.   
In fact, for any planar graph, its spanning tree congestion is equal to its dual's minimum stretch factor 
plus one~\cite{fakete_2001_tree_spanner,otachi_2010_complexity_result_stc}. 
This direction of research has been extensively explored, see~\cite{cai_1995_tree_spanner,dragan_2011_spanner_in_sparse_graph,yuval_2009_approx_min_max_stretch}. 
As an aside, we remark that the complexity of the tree $3$-spanner problem has been open since its first introduction in 1995~\cite{cai_1995_tree_spanner}. 

$\problemSTC$ is also intimately related to problems involving cycle bases in graphs.
As each spanning tree induces a fundamental cycle basis of the given graph, a spanning tree with low congestion yields a cycle basis 
for which the edge-cycle incidence matrix is sparse. Sparsity of such matrices is desirable in linear-algebraic approaches to solving
some graph optimization problems, for example analyses of
distribution networks such as pipe flow systems~\cite{alvarruiz_2015_improving_efficiency_loop_method}. 

$\problemSTC$ can be thought of as an extreme case of the graph sparsification problem, where, given a graph $G$, the objective is
 to compute a sparse graph $H$ 
that captures connectivity properties of $G$. Such $H$ can be used instead of $G$ for the purpose of various analyses, to improve efficiency.
See~\cite{benczur_1996_approximating_st_min_cut,fung_2011_framework_graph_sparsification,spielman_2011_spectral_sparsification} 
(and the references therein) for some approaches to graph sparsification.


\section{Preliminaries}
\label{sec: preliminaries}


\myparagraph{Basic graph terminology}
Let $G$ be a simple graph with vertex set $V$ and edge set $E$. 
We use notation $\neighbor{G}{v}$ for the neighborhood of a vertex $v \in V$ and $\degree{G}{v}$ for its degree. 
For a vertex $v \in V$, its \emph{eccentricity} $\eccentricity{G}{v}$ is defined as the maximum distance from $v$ to any other vertex. 
The \emph{radius} of $G$ is $\radius{G} = \min_{v \in V} \eccentricity{G}{v}$. 

Consider a spanning tree $T\subseteq E$ of $G$. If $e = (u,v)\in T$, removing $e$ from $T$ splits $T$ into two subtrees. 
We denote by $T_{u,v}$ the subtree that contains $u$ and by $T_{v,u}$ the subtree that contains $v$. 
Let the \emph{cut-set} of $e$, denoted $\cross{G,T}{e}$, be the set of edges in $E$ that have one endpoint in $T_{u,v}$ and the other in $T_{v,u}$. 
In other words, $\cross{G,T}{e}$ consists of the edges $(u',v') \in E$ for which the unique (simple) path in $T$ from $u'$ to $v'$
goes through $e$. Note that $e \in \cross{G,T}{e}$.
The \emph{congestion of $e$}, denoted by $\cng{G,T}{e}$, is the cardinality of $\cross{G,T}{e}$. 
The \emph{congestion of tree $T$} is $\cng{G}{T} = \max_{e \in T} \cng{G,T}{e}$. 
Finally, the \emph{spanning tree congestion of graph $G$}, denoted by $\stc{G}$, is defined as the minimum value of 
$\cng{G}{T}$ over all spanning trees $T$ of $G$. 


\myparagraph{Weighted edges}
The concept of the spanning tree congestion extends naturally to edge-weighted graphs. An edge $e$ with integer weight $\omega\ge 1$
contributes $\omega$ to the congestion of any edge $f$ for which $e \in \cross{G,T}{f}$. One can think of $e$ as representing
$\omega$ parallel edges between $u$ and $v$. We refer to these parallel edges as a \emph{non-weighted realization} of a weighed edge $e$.
Indeed, replacing $e$ by this realization does not affect the minimum congestion value, because in a multigraph
only one edge between any two given 
vertices can be in a spanning tree, but all of them belong to the cut-set $\cross{G,T}{f}$ of
any edge $f\in T$ whose removal separates these vertices in $T$ (and thus all contribute to $\cng{G,T}{f}$). 

We can also realize a weighted edge using a simple graph (without multiple edges). 
As observed in~\cite{otachi_2010_complexity_result_stc} (and is easy to prove), 
edge subdivision does not affect the spanning tree congestion of a graph, so,
instead of using parallel edges we can realize an edge of weight $\omega$ using $\omega$ parallel disjoint paths. 
(See Figure~\ref{fig: single weight} for illustration.) 
We state our results in terms of simple graphs, but we use weighted graphs in our proofs with the understanding
that they actually represent simple graphs.
As all weights used in the paper are constant, the computational complexity of $\problemKSTC{K}$ is not affected.
The proof in Section~\ref{sec: k-stc hardness proof} does not depend on what realization of weighted edges we use,
while the proof in Section~\ref{sec: k-stc hardness bipartite graph radius 2} uses a specific realization that
we refer to as \emph{spintop}:  an edge $(u,v)$ of weight $\omega$ is realized using
$\omega-1$ length-three $u$-to-$v$ paths in addition to a non-weighted edge $(u,v)$ itself (see Figure~\ref{fig: single weight}b).

\begin{figure}[ht]
	\begin{center}
		\includegraphics[height = 1.5in]{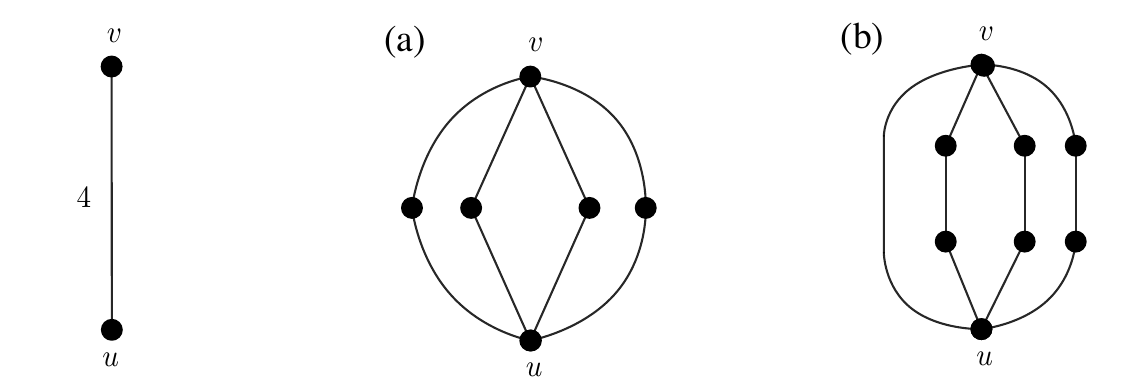}
	\end{center}
	\caption{Two different realizations of an edge $(u,v)$ of multiplicity $4$. 
	(a) A basic realization using paths of length $2$.
	(b) The spintop realization used in Section~\ref{sec: k-stc hardness bipartite graph radius 2}.}
	\label{fig: single weight}
\end{figure}

\smallskip


\myparagraph{Double weights}
In fact, it is convenient to generalize this further by introducing edges with \emph{double weights}. A double
weight of an edge $e$ is denoted $\dw{\omega}{\omega'}$, where $\omega$ and $\omega'$ are positive integers such that $\omega \le \omega' \le K - 1$, 
and its interpretation in the context of $\problemKSTC{K}$ is as follows: Given a spanning tree $T$,
\begin{itemize}[label=$\circ$]
	\item if $e\in E\setminus T$ then $e$ contributes $\omega$ to the congestion $\cng{G,T}{f}$
			of any edge $f\in T$ for which $e\in \cross{G,T}{f}$, and 
	\item if $e\in T$ then $e$ contributes $\omega'$ to its own congestion, $\cng{G,T}{e}$. 
\end{itemize}
The lemma below provides a simple-graph realization of double-weighted edges. It implies that
including such edges does not affect the computational
complexity of $\problemKSTC{K}$, allowing us to formulate our proofs in terms of graphs where some edges have double weights.

%
\medskip
\begin{lemma}\label{lemma: double weights}
Let $(u,v)$  be an edge in $G$ with double weight $\dw{\omega}{\omega'}$, where $\omega \le \omega'  \le K - 1$.  
Consider another graph $G'$ obtained from $G$ by 
removing $(u,v)$, and for each $i = 1,2,\cdots, \omega$ adding a new vertex $w_i$ with two edges:
edge  $(u, w_i)$ of weight $1$ and edge $(w_i, v)$ of weight $\omega'-\omega+1$ (see Figure~\ref{fig: double weights}a for an example). 
Then, $\stc{G} \le K$ if and only if $\stc{G'} \le K$. 
\end{lemma}

\begin{figure}[ht]
	\begin{center}
		\includegraphics[width = 5in]{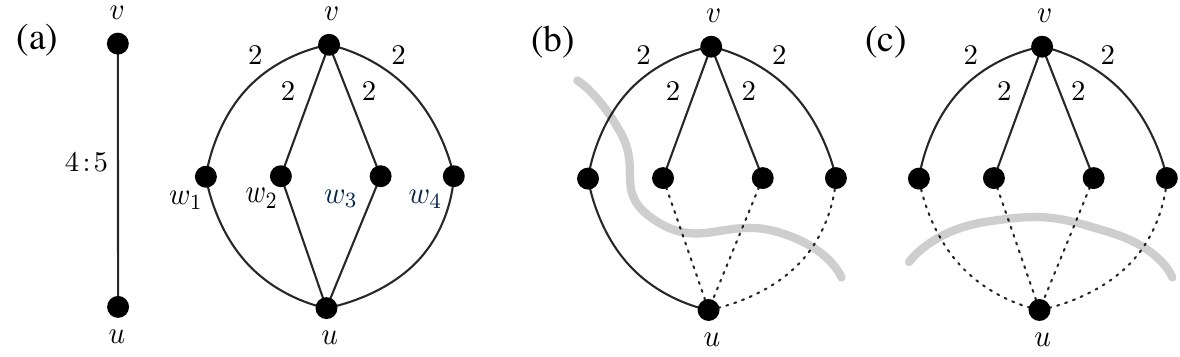}
	\end{center}
	\caption{(a) On the left, an edge $(u,v)$ with double weight $\dw{4}{5}$ in $G$. On the right, the realization of $(u,v)$ in $G'$. 
	If one applies the spintop realization of the edges from $v$ to $w_i$'s, as in Figure~\ref{fig: single weight}b, then
	the subgraph on the right realizing $(u,v)$ is bipartite and all its nodes are within distance $2$ from $v$.
	Figures~(b) and~(c) illustrate the proof of Lemma~\ref{lemma: double weights}:
	 (b) the traversal of $T'$ and the cut of $(u,v)$ when $(u,v)\in T$,  
	 (c) the traversal of $T'$ and the cut containing $(u,v)$ when $(u,v) \notin T$. 
	 Solid lines are tree edges and dotted lines are non-tree edges.}
	\label{fig: double weights}
\end{figure}

\begin{proof}
Denote by $W = \braced{w_1, w_2 ,..., w_\omega}$ the set of new vertices, and by $W_u = \braced{(u,w_i) \; |\; w_i \in W}$ 
and $W_v= \braced{(w_i, v)\; |\; w_i \in W}$ the sets of new edges added to $G'$.

\smallskip \noindent
($\Rightarrow$)
Suppose that $G$ has a spanning tree $T$ with $\cng{G}{T} \le K$. 
We will show that there exists a spanning tree $T'$ of $G'$ with $\cng{G'}{T'} \le K$.
We break the proof into two cases, in both cases showing that $\cng{G',T'}{e} \le K$ for each edge $e\in T'$.

\smallskip\noindent
\mycase{1} $(u,v) \in T$.  

Consider the spanning tree $T' = T  \setminus \braced{(u,v)} \cup W_v \cup \braced{(w_1, u)} $ of $G'$ (see Figure~\ref{fig: double weights}b). 
For every edge $(x,y) \in E \setminus \braced{(u,v)} $, the $x$-to-$y$ paths in $T$ and $T'$ are the same, except that if the
$x$-to-$y$ path in $T$ traverses edge $(u,v)$ then the $x$-to-$y$ path in $T'$ traverses $(u,w_1), (w_1,v)$ instead. Therefore, 
\begin{itemize}[label=$\circ$]
	\item If $e \in T' \setminus (W_v \cup \braced{(u, w_1)})$, then $\cross{G',T'}{e} = \cross{G,T}{e}$.
		 So $\cng{G',T'}{e}  = \cng{G,T}{e} \le K$. 
	\item If $e = (u, w_1)$, then $\cross{G',T'}{e} = \cross{G,T}{u,v} \setminus \braced{(u,v)} \cup W_u$. 
		By the definition of double weights, $(u,v)$ contributes $\omega'$ to $\cng{G,T}{u,v}$ while each edge in $W_u$ contributes $1$ to $\cng{G',T'}{e}$. 
		Hence, $\cng{G',T'}{e} = \cng{G,T}{u, v} - \omega' + \omega \le \cng{G,T}{e} \le K$.  
	\item If $e = (w_1, v)$, then $\cross{G',T'}{e} = \cross{G,T}{u,v}  \setminus \braced{(u,v)} \cup \braced{e} \cup (W_u \setminus \braced{(w_1, u)})$. 
		Since $e$ contributes $\omega' - \omega  +1$ to its own congestion, we have: $\cng{G',T'}{e} = \cng{G,T}{u, v} - \omega' + (\omega' - \omega + 1) + (\omega  -1) = \cng{G,T}{u, v} \le K$.  
	\item Lastly, if $e \in W_v \setminus \braced{(w_1, v)}$ then it is a leaf edge, 
			we have $\cng{G',T'}{e} = \omega' - \omega + 2 \le \omega'+ 1\le K$. 
\end{itemize}

\smallskip\noindent
\mycase{2} $(u,v)\notin T$. 

Let $T' = T  \cup W_v$, which is a spanning tree of $G'$ (see Figure~\ref{fig: double weights}c). We consider the following sub-cases: 
\begin{itemize}[label=$\circ$]
	\item If $e \in W_v$ then, as $e$ is a leaf edge, we have $\cng{G',T'}{e} = \omega' - \omega + 2 \le  \omega'+ 1\le K$.
	\item If $e \in T' \setminus W_v$ and $e$ is not on the $u$-to-$v$ path in $T'$, 
						then $\cross{G',T'}{e} = \cross{G,T}{e}$. So $\cng{G',T'}{e} = \cng{G,T}{e} \le K$.
	\item If $e \in T' \setminus W_v$ and $e$ is on the $u$-to-$v$ path in $T'$,
		 				then $\cross{G',T'}{e} = \cross{G,T}{e} \setminus \braced{(u,v)} \cup W_u$. 
							Since $(u,v)$ contributes $\omega$ to $\cng{G,T}{e}$ and $W_u$ also contributes $\omega$ to $\cng{G',T'}{e}$, 
			we have $\cross{G',T'}{e} = \cross{G,T}{e} \le K$.
\end{itemize}

\noindent
We have shown that $\cng{G'}{T'} \le K$ in all cases, which completes the proof for the forward implication. 
We now proceed to the proof of the converse implication. 

\smallskip 
\noindent
($\Leftarrow$) 
Let $T'$ be the spanning tree of $G'$ with congestion $\cng{G'}{T'} \le K$.
We will show that there exists a spanning tree $T$ of $G$ with $\cng{G}{T} \le K$. 
Note that, for any $w_i \in W$, $T'$ traverses at least one of the two edges $(u,w_i)$ and $(w_i, v)$. 
Furthermore, at most one vertex in $W$ is a non-leaf.  We consider three cases.
In the first two cases the arguments follow the same pattern as in the proof for the $(\Rightarrow)$ implication,
in essence reversing the modification of the spanning tree. Then the third case reduces to the second case.

\smallskip\noindent
\mycase{1} Exactly one vertex in $W$ is a non-leaf in $T'$. 

\smallskip
Without loss of generality, we can assume $w_1$ is a non-leaf vertex (that is, both $(u,w_1)$ and $(w_1,v)$ are in $T$)
and $W \setminus \braced{w_1}$ are leaves. 
We construct $T$ by adding $(u,v)$ to $T'$ and removing all vertices of $W$ and their incident edges from $T$. 
By the construction, $T$ is a spanning tree of $G$. We have: 
\begin{itemize}[label=$\circ$]
	\item If $e \in T \setminus \braced{(u,v)}$, then $\cng{G,T}{e} = \cng{G',T'}{e}\le K$. 
	\item If $e=(u,v)$, then $\cng{G,T}{e} \le \cng{G',T'}{v, w_1}\le K$. 
\end{itemize}

\smallskip\noindent
\mycase{2} All vertices in $W$ are leaves and $T'$ traverse all edges in $W_v$.

\smallskip 
Let $T = T' \setminus W_v$, which is a spanning tree of $G$. Then
\begin{itemize}[label=$\circ$]
	\item If $e \in T$ and $e$ is not on the $u$-to-$v$ path in $T$, then $\cng{G,T}{e} = \cng{G',T'}{e}\le K$.
	\item If $e \in T$ and $e$ is on the $u$-to-$v$ path in $T$, then
			$(u,v)$ and $W_u$ contribute the same amount $\omega$ to the congestion of $e$ in $T$ and $T'$, respectively,
			implying that $\cng{G,T}{e} = \cng{G',T'}{e} \le K$. 
\end{itemize}

\smallskip\noindent
\mycase{3} All vertices in $W$ are leaves and $T'$ traverses at least one edge in $W_u$.

\smallskip 
In this case, we consider another spanning tree $T''$ of $G'$ that traverses all edges in $W_v$ and does not use any edge in $W_u$. 
It is sufficient to show that  $\cng{G'}{T''} \le \cng{G'}{T'}$, since it implies that
$\cng{G'}{T''} \le K$, and then we can apply Case~2 to $T''$.
We examine the congestion values of each edge $e \in T''$:
\begin{itemize}[label=$\circ$]
	\item If $e \in T'' \setminus W_v$ and $e$ is not on the $u$-to-$v$ path in $T''$, 
				then $e \in T'$ and $\cross{G',T''}{e} =  \cross{G',T'}{e}$, implying $\cng{G',T''}{e} =  \cng{G',T'}{e}$. 
	\item If $e \in T'' \setminus W_v$ and $e$ is on the $u$-to-$v$ path in $T''$, 
			then for each vertex $w_i \in W$ either $(u, w_i)$ contributes $1$ or $(w_i, v)$ contributes $\omega'-\omega+1 \ge 1$ to $\cng{G',T'}{e}$. 
		On the other hand, in $T''$, all edges in $W_u$ are in $\cross{G',T''}{e}$ and contribute a total of $\omega$ to $\cng{G',T''}{e}$. 
		Thus, $\cng{G', T''}{e} \le \cng{G', T'}{e}$. 
	\item If $e \in W_v$, then  $\cng{G',T''}{e} = \omega' - \omega +2 \le \omega'+1 \le K$.
\end{itemize}

\noindent 
In all cases, we have proved that there is a spanning tree $T$ of $G$ that has congestion 
at most $K$ establishing the validity of the backward implication. 
\end{proof}


As explained earlier, in Section~\ref{sec: k-stc hardness bipartite graph radius 2} we will use the spintop realization
for weighted edges. With this, the realization of an edge $e = (u,v)$ with double weight $\dw{\omega}{\omega'}$ will use
the spintop realization for the edges of weight $\omega'-\omega+1$ between $v$ and the $w_i$'s.
The property of this realization that will be crucial in Section~\ref{sec: k-stc hardness bipartite graph radius 2}
is that it is bipartite and all its nodes are within distance $2$ from $v$.

\smallskip
\emph{Remark:}
Some readers may have noticed that there is a simpler way to realize an edge $(u,v)$
with a double weight $\dw{\omega}{\omega'}$: replace it by a length-$2$ path
$(u,w)$, $(w,v)$, where $w$ is a new vertex, edge $(u,w)$ has weight $\omega$, and edge $(w,v)$ has weight $\omega'$.
This indeed works, but can be used only when $\omega+\omega' \le K$. This is because, in this construction,
if $w$ is a leaf of a spanning
tree, the congestion of the tree edge from $w$ will be $\omega+\omega'$, and this congestion value cannot exceed $K$.
This realization of double-weighted edges would suffice for our proof in Section~\ref{sec: k-stc hardness proof},
but not the one in Section~\ref{sec: k-stc hardness bipartite graph radius 2}. (It may also be useful for
establishing other hardness results for $\problemSTC$.)


\section{$\NP$-completeness proof of $\problemKSTC{K}$ for $K \ge 5$}
\label{sec: k-stc hardness proof}

In this section we prove our main result, the $\NP$-completeness of  $\problemKSTC{K}$. Our proof uses an $\NP$-complete
variant of the satisfiability problem called (2P1N)-SAT~\cite{dahlhaus_1994_complexity_multiterminal_cuts,yoshinaka_2005_2p1n_sat}. An instance of (2P1N)-SAT
is a boolean expression $\phi$ in conjunctive normal form, where each variable occurs exactly three times,
twice positively and once negatively, and each clause contains exactly two or three literals of different variables.
The objective is to decide if $\phi$ is satisfiable, that is if there is a satisfying assignment that makes $\phi$ true.

For each constant $K$, $\problemKSTC{K}$ is clearly in $\NP$. We will present a polynomial-time reduction from (2P1N)-SAT.
In this reduction, given an instance $\phi$ of (2P1N)-SAT, we construct a graph $G$ with the following property:

\smallskip
\begin{description}
	\item{$(\ast)$} $\phi$ has a satisfying truth assignment if and only if $\stc{G} \le K$. 
\end{description}

\smallskip

Throughout the proof, the three literals of $x_i$ in $\phi$ will be denoted by $x_i$, $x'_i$, and $\barx_i$, where
$x_i$, $x'_i$ are the two positive occurrences of $x_i$ and $\barx_i$ is the negative occurrence of $x_i$.
We will also use notation $\tildex_i$ to refer to an unspecified literal of $x_i$, that is $\tildex_i\in\braced{x_i,x'_i,\barx_i}$.

We now describe the reduction. 
Set $k_i = K-i$ for $i = 1,2,3,4$. (In particular, for $K=5$, we have $k_1 = 4$, $k_2 = 3$, $k_3 = 2$, $k_4 = 1$). 
$G$ will consist of gadgets corresponding to variables, with the gadget corresponding to $x_i$ having three vertices $x_i$, $x'_i$, and $\barx_i$,
that represent its three occurrences in the clauses. 
$G$ will also have vertices representing clauses and edges connecting
literals with the clauses where they occur (see Figure~\ref{fig: graph example}b for an example). As explained in Section~\ref{sec: preliminaries}, without any loss of generality we can allow edges in $G$ to have constant-valued weights, single or double.  Specifically, 
starting with $G$ empty, the construction of $G$ proceeds as follows:
\begin{itemize}[label=$\circ$]
	\item Add a \emph{root vertex} $r$. 
	\item For each variable $x_i$, construct the \emph{$x_i$-gadget} (see Figure~\ref{fig: graph example}a).
		 This gadget has three vertices corresponding to the literals: 
		 a \emph{negative literal vertex} $\barx_{i}$ and two \emph{positive literal vertices} $x_i, x_i'$,
		 and two auxiliary vertices $y_i$ and $z_i$. Its edges and their weights are given in the table below:
		\smallskip
		 \begin{center}
		 	\setlength\tabcolsep{2pt}
		 	\begin{tabular*}{\linewidth}{@{\extracolsep{\fill}} |c||c|c|c|c|c|c|c| } \hline
				edge &  $(\barx_i, z_i)$ \ \ &  $(z_i, x_i)$ \ \  &  $(x_i, x_i') $ \  \ &  $(r,x_i')$ \ \ &  $(r, y_i)$ \  \ &  $(y_i, z_i)$ \  \ &  $(y_i, \bar{x}_i)$ \ \ 
				\\ \hline
				weight & $\dw{1}{k_3}$ \ & $\dw{1}{k_3}$ \ & $\dw{1}{k_2}$ \ & $k_3$ \ & $k_4$ \ & $k_4$ \ & $\dw{1}{k_2}$ \
				\\ \hline
			\end{tabular*}
		\end{center}
		\smallskip
	\item For each clause $c$, create a \emph{clause vertex} $c$.  For each literal $\tildex_i$ in $c$, add 
	the corresponding \emph{clause-to-literal edge} $(c,\tildex_i)$ of weight $\dw{1}{k_2}$.
	Importantly, as all literals in $c$ correspond to different variables, these edges will go to different variable gadgets.
	\item For each two-literal clause $c$, add a \emph{root-to-clause} edge $(r,c)$ of weight $\dw{1}{k_1}$.
\end{itemize}

\begin{figure}[ht]
	\begin{center}
		\includegraphics[height = 1.8in]{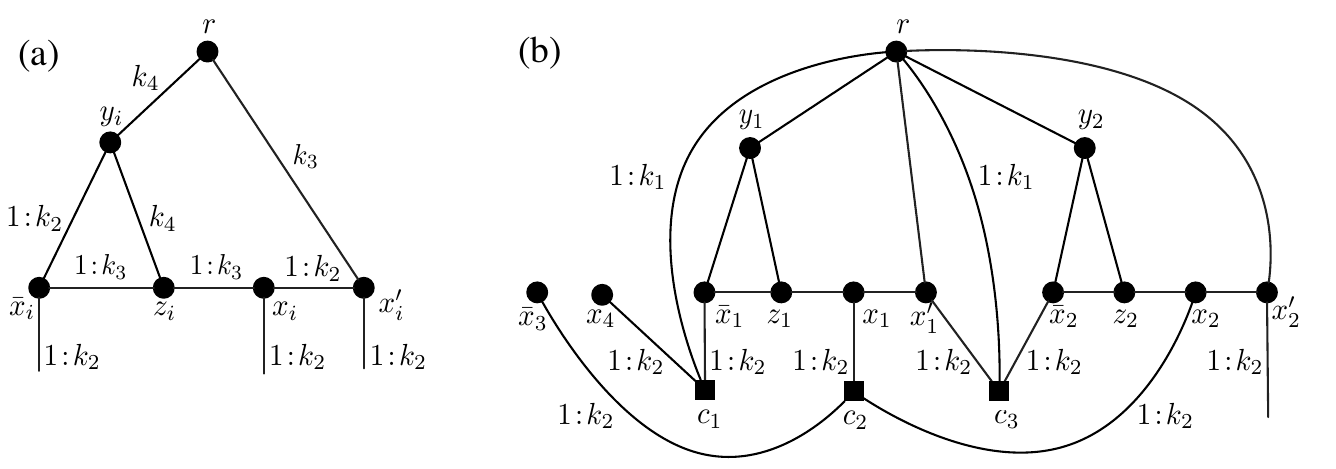}
	\end{center}
	\caption{(a)The $x_i$-gadget. (b) An example of a partial graph $G$ for the boolean expression 
	$\phi =  c_1 \land c_2 \land c_3 \land \cdots$ where 
	$c_1 = \barx_1 \lor x_4$, $c_2 =  x_1 \lor x_2 \lor \barx_3$, and $c_3 = x_1 \lor \barx_2$.
	(The weights of edges inside the variable gadgets are not shown.)
	}
	\label{fig: graph example}
\end{figure}

We now show that $G$ has the required property $(\ast)$, proving the two implications separately.


$(\Rightarrow)$ Suppose that $\phi$ has a satisfying assignment. Using this assignment, we
construct a spanning tree $T$ of $G$ as follows: 
\begin{itemize}[label=$\circ$]
	\item For every $x_i$-gadget, include in $T$ edges $(r, x_i')$, $(r, y_i)$, and $(y_i, z_i)$. 
		If $x_i = 0$,  include in $T$ edges $(\bar{x}_i, z_i)$ and $(x_i, x_i')$, otherwise include in $T$ edges $(y_i, \bar{x}_i)$ and $(z_i, x_i)$. 
	\item For each clause $c$, include in $T$ one clause-to-literal edge that is incident to any literal vertex that satisfies $c$
		in our chosen truth assignment for $\phi$.
\end{itemize}

By routine inspection, $T$ is indeed a spanning tree of $G$: Each $x_i$-gadget is traversed from $r$ without cycles, and all
clause vertices are leaves of $T$. 
Figures~\ref{fig: neg spanning tree} and~\ref{fig: pos spanning tree} show how $T$ traverses an $x_i$-gadget in different cases, depending
on whether $x_i = 0$ or $x_i = 1$ in the truth assignment for $\phi$, and on which literals are chosen to satisfy each clause. 
Note that the edges with double weights satisfy the assumption of Lemma~\ref{lemma: double weights} in Section~\ref{sec: preliminaries},
that is each such weight $\dw{1}{\omega'}$ satisfies $1 \le \omega' \le K - 1$. 

We need to verify that each edge in $T$ has congestion at most $K$. 
All the clause vertices are leaves in $T$, thus the congestion of each clause-to-literal edge is $k_2 + 2 = K$ (this
holds for both three-literal and two-literal clauses). 
To analyze the congestion of the edges inside an $x_i$-gadget, we consider two cases, depending on the value of $x_i$ in our truth assignment.

When $x_i = 0$, we have two sub-cases (a) and (b) as shown in Figure~\ref{fig: neg spanning tree}. The congestions of the edges in the $x_i$-gadget are as follows:
\begin{itemize}[label=$\circ$]
	\item In both cases, $\cng{G, T}{r, x_i'}= k_3 + 3$.  
	\item In case~(a), $\cng{G, T}{r,y_i} = k_4 + 3$.  In case~(b), it is $k_4 + 2$.    
	\item In case~(a), $\cng{G, T}{y_i,z_i} = k_4 + 4$. In case~(b), it is $k_4 + 3$.  
	\item In case~(a), $\cng{G, T}{\barx_i, z_i} = k_3 +3$. In case~(b), it is  $k_3 + 2$.  
	\item In both cases, $\cng{G, T}{x_i, x_i'} = k_2+2$.
\end{itemize}
%

\begin{figure}[ht]
	\begin{center}
		\includegraphics[height = 1.8in]{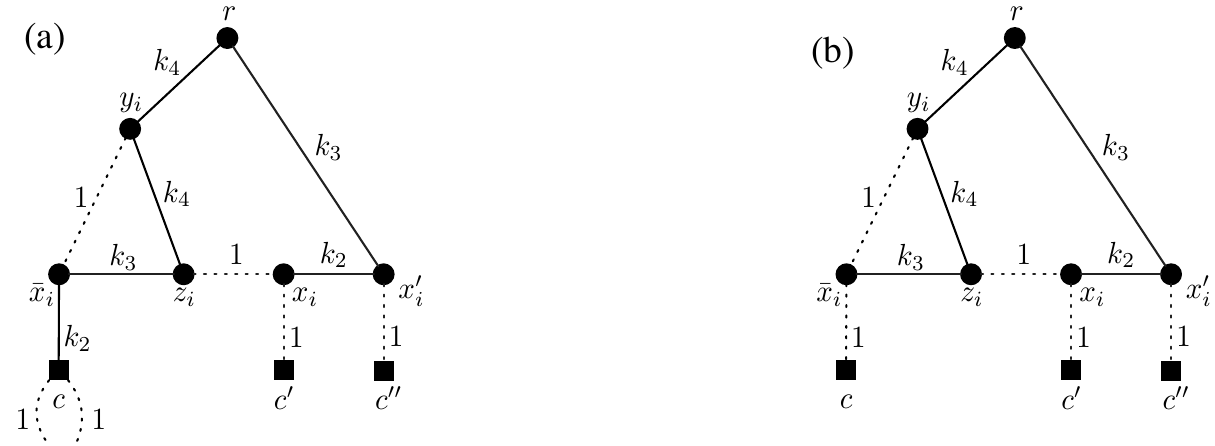}
	\end{center}
	\caption{The traversal of the $x_i$-gadget by $T$ when $x_i = 0$. Solid lines are tree edges, dotted lines are non-tree edges. 
	(a) $\barx_i$ is chosen by clause $c$. 
	(b) $\barx_i$ is not chosen by clause $c$. }
	\label{fig: neg spanning tree}
\end{figure}

On the other hand, when $x_i = 1$, we have four sub-cases. Figure~\ref{fig: neg spanning tree} illustrates cases (a)--(c). 
In case (d) (not shown in Figure~\ref{fig: neg spanning tree}), none of the positive literal vertices $x_i, x_i'$ is chosen to satisfy their corresponding clauses. 
The congestions of the edges in the $x_i$-gadget are as follows:
\begin{itemize}[label=$\circ$]
	\item In cases~(a) and (b), $\cng{G, T}{r, x_i'}=k_3 +3$. In cases~(c) and~(d), it is $k_3 + 2$. 
	\item In cases~(a) and (c), $\cng{G, T}{r,y_i}  = k_4 + 4$. In cases~(b) and~(d), it is $k_4 + 3$.   
	\item In cases~(a) and (c), $\cng{G, T}{y_i, z_i}  = k_4 + 4$. In cases~(b) and~(d), it is $k_4 + 3$.   
	\item In cases~(a) and (c), $\cng{G, T}{z_i,x_i}  = k_3 + 3$. In cases~(b) and~(d), it is $k_3 + 2$.   
	\item In all cases, $\cng{G, T}{y_i, \barx_i} = k_2 +2$. 
\end{itemize}
%

\begin{figure}[ht]
	\begin{center}
		\includegraphics[height = 1.8in]{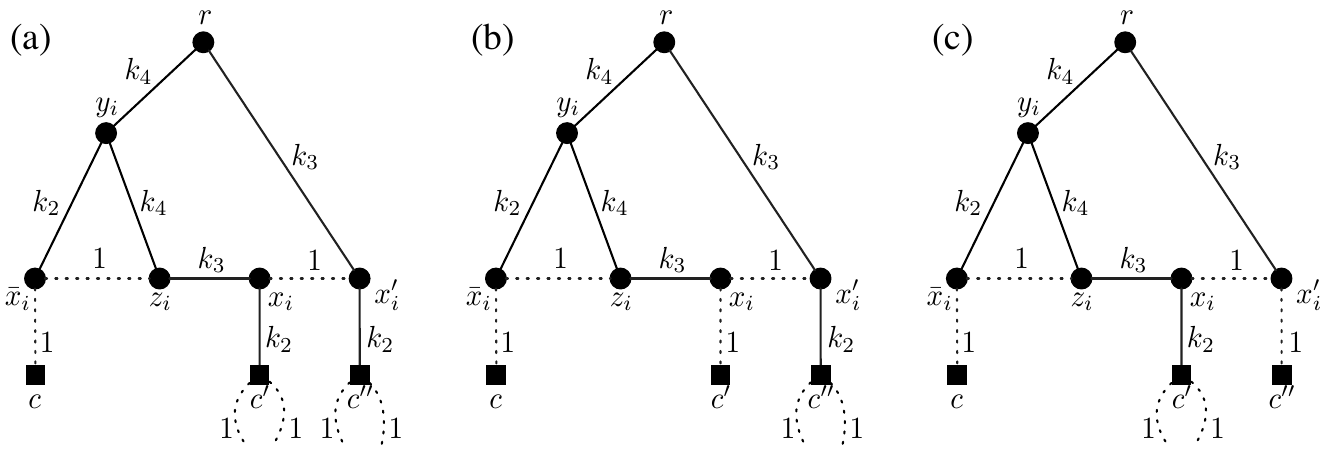}
	\end{center}
	\caption{The traversal of the $x_i$-gadget by $T$ when $x_i = 1$. 
	By $c$, $c'$ and $c''$ we denote the clauses that contain literals $\barx_i$, $x_i$ and $x'_i$, respectively. 
	(a) $x_i$ and $x'_i$ are chosen by clauses $c'$ and $c''$. 
	(b) $x'_i$ is chosen by clause $c''$. 
	(c) $x_i$ is chosen by clause $c'$. }
	\label{fig: pos spanning tree}
\end{figure}

In summary, the congestion of each edge of $T$ is at most $K$. Thus $\cng{G}{T} \le K$; in turn, $\stc{G} \le K$, as claimed.


$(\Leftarrow)$
We now prove the other implication in~$(\ast)$. We assume that $G$ has a spanning tree $T$ with $\cng{G}{T} \le K$. 
We will show how to convert $T$ into a satisfying truth assignment for $\phi$.
The proof consists of a sequence of claims showing that $T$ must have a special form that will allow us to
define this truth assignment.
 

\medskip

\begin{claim}
\label{claim: root to literal path}
Each $x_i$-gadget satisfies the following property: for each literal vertex $\tildex_i$,
if some edge $e$ of $T$ (not necessarily in the $x_i$-gadget) is on the $r$-to-$\tildex_i$ path in $T$, 
then $\cross{G,T}{e}$ contains at least two distinct edges from this gadget other than $(y_i, z_i)$. 
\end{claim}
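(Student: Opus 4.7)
The plan is to exploit a cycle that is hidden in the $x_i$-gadget. The gadget has seven edges, and I claim that the six edges other than $(y_i, z_i)$, listed cyclically as
\[
(r, y_i),\;(y_i, \bar{x}_i),\;(\bar{x}_i, z_i),\;(z_i, x_i),\;(x_i, x_i'),\;(x_i', r),
\]
form a $6$-cycle $C$ in $G$ that passes through $r$ and through each of the three literal vertices $x_i$, $x_i'$, $\bar{x}_i$. I would first verify this structural observation directly from the edge table in Section~\ref{sec: hardness proof}.

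Next I would translate the hypothesis into cut language. Since $e$ lies on the $r$-to-$\tildex_i$ path in $T$, removing $e$ from $T$ partitions $V(G)$ into two components, one containing $r$ and the other containing $\tildex_i$. By the definition of $\cross{G,T}{e}$, the gadget edges that belong to $\cross{G,T}{e}$ are precisely those gadget edges whose two endpoints lie on opposite sides of this partition.

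The key step is then to invoke the elementary fact that a cycle is $2$-edge-connected: for any bipartition of the vertex set of a cycle into two non-empty parts, the number of cycle edges crossing the partition is even and at least~$2$. Applying this to $C$, whose vertices $r$ and $\tildex_i$ lie on opposite sides of our bipartition, we obtain at least two edges of $C$ in $\cross{G,T}{e}$. Since every edge of $C$ is a gadget edge different from $(y_i, z_i)$, the claim follows.

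I do not expect a serious obstacle here: the only non-routine step is spotting the cycle structure, after which the conclusion is immediate from the $2$-edge-connectivity of a cycle. The role of the excluded edge $(y_i, z_i)$ is simply that it is the unique chord of $C$ inside the gadget, so deleting it from consideration is exactly what leaves us with a clean cycle.
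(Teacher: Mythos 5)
Your proposal is correct and is essentially the paper's own argument in different clothing: the two arcs of your $6$-cycle between $r$ and $\tildex_i$ are exactly the ``two edge-disjoint paths from $r$ to $\tildex_i$ avoiding $(y_i,z_i)$'' that the paper invokes, and counting cycle edges crossing the cut is the same as noting that each of the two paths must contribute a distinct crossing edge. No gap; the cycle-based phrasing is just a slightly more explicit way of packaging the same one-line observation.
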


\medskip

This claim is straightforward: it follows directly from the fact that there are two edge-disjoint paths from $r$ 
to any literal vertex $\tildex_i\in\braced{\barx_i, x_i, x_i'}$ that do not use edge $(y_i, z_i)$.

%

\medskip

\begin{claim}\label{claim: root-to-clause edge}
	For each two-literal clause $c$,  edge $(r,c)$ is not in $T$. 
\end{claim}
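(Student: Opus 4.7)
The plan is to argue by contradiction: suppose $(r,c)\in T$ for some two-literal clause $c$. Since $(r,c)$ has double weight $\dw{1}{k_1}$ and belongs to $T$, it contributes $k_1=K-1$ to its own congestion, leaving a budget of only $1$ for non-tree edges crossing it. I will show that this budget is always exceeded, so that $\cng{G,T}{r,c}>K$, contradicting $\cng{G}{T}\le K$. Observe that $c$ has degree exactly $3$ in $G$, with incident edges $(r,c)$, $(c,\tildex_i)$, and $(c,\tildex_j)$, where $x_i$ and $x_j$ are the two distinct variables whose literals appear in $c$.

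I split the analysis by the number $m\in\braced{0,1,2}$ of clause-to-literal edges at $c$ that lie in $T$. If $m=0$, then $c$ is a leaf of $T$, so both $(c,\tildex_i)$ and $(c,\tildex_j)$ are non-tree edges of $\cross{G,T}{r,c}$ and each contributes $1$, giving $\cng{G,T}{r,c}\ge k_1+2=K+1$. If $m=1$, say $(c,\tildex_i)\in T$, then $(r,c)$ lies on the $r$-to-$\tildex_i$ path in $T$; Claim~\ref{claim: root to literal path} then supplies two distinct edges of the $x_i$-gadget (other than $(y_i,z_i)$) in $\cross{G,T}{r,c}$, again pushing $\cng{G,T}{r,c}$ to at least $k_1+2=K+1$. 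Finally, if $m=2$, then $(r,c)$ lies on both the $r$-to-$\tildex_i$ and $r$-to-$\tildex_j$ tree paths, so applying Claim~\ref{claim: root to literal path} separately to the two gadgets produces $4$ edges in $\cross{G,T}{r,c}$, whence $\cng{G,T}{r,c}\ge k_1+4=K+3$. Every case contradicts $\cng{G}{T}\le K$.

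The only point that deserves attention is the $m=2$ case: the two invocations of Claim~\ref{claim: root to literal path} must yield disjoint sets of cross-edges so that the count $4$ is legitimate. This is immediate from the construction, because the vertex sets of the $x_i$- and $x_j$-gadgets share only $r$, so the edges listed in the two gadget tables (each specified by a non-root endpoint in its own gadget) are disjoint, and hence no edge lies in both gadgets. In the cases $m=0$ and $m=1$, only a single invocation of Claim~\ref{claim: root to literal path} (or none) is needed, so no distinctness argument is required. The proof therefore requires no ideas beyond Claim~\ref{claim: root to literal path} together with a direct counting of contributions to $\cng{G,T}{r,c}$.
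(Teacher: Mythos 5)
Your proof is correct and follows essentially the same route as the paper: assume $(r,c)\in T$, note that it contributes $k_1$ to its own congestion, and then either count the two non-tree clause-to-literal edges (leaf case) or invoke Claim~\ref{claim: root to literal path} to find two gadget edges crossing $(r,c)$, giving $\cng{G,T}{r,c}\ge k_1+2>K$ in every case. The only difference is that you split off $m=2$ as a separate case with a stronger bound, which the paper subsumes under "at least one clause-to-literal edge is in $T$"; this extra case is harmless but unnecessary.
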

 
\medskip

For each literal $\tildex_i$ of clause $c$, there is an $r$-to-$c$ path via the $x_i$-gadget, so,
together with edge $(r,c)$, $G$ has three disjoint $r$-to-$c$ paths.
Thus, if $(r,c)$ were in $T$, its congestion would be at least $k_1+2>K$, proving Claim~\ref{claim: root-to-clause edge}.

%

\medskip

\begin{claim}\label{claim: clause vertex}
	All clause vertices are leaves in $T$. 
\end{claim}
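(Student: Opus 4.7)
I would argue by contradiction: suppose some clause vertex $c$ has $\deg_T(c)\ge 2$, and exhibit an edge of $T$ whose congestion exceeds $K$. By Claim~\ref{claim: root-to-clause edge}, if $c$ is a two-literal clause then $(r,c)\notin T$; in either case all tree-edges incident to $c$ are clause-to-literal edges. Rooting $T$ at $r$, and using $c\neq r$, the vertex $c$ has a parent $p$ and at least one child in the rooted tree, each a literal vertex.

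Focus on $e=(c,p)\in T$ and remove $e$ to obtain components $C_c$ (the subtree at $c$, containing every child of $c$) and $C_p\ni r$. Since $e$ has weight $\dw{1}{k_2}$, it contributes $k_2$ to its own congestion. Applying Claim~\ref{claim: root to literal path} to a child $q$ of $c$ --- noting that $e$ lies on the $r$-to-$q$ path in $T$ --- gives at least two edges of $q$'s gadget in $\cross{G,T}{e}$, each contributing at least $1$. Thus $\cng{G,T}{e}\ge k_2+2=K$, and only one more unit of congestion is needed to contradict $\cng{G}{T}\le K$.

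The extra unit is produced by a short case analysis on $c$. (a) If $c$ is two-literal, the non-tree edge $(r,c)$ of weight $\dw{1}{k_1}$ connects $r\in C_p$ to $c\in C_c$, so it lies in $\cross{G,T}{e}$ and adds $1$. (b) If $c$ is three-literal with $\deg_T(c)\ge 3$, pick a second child $q'$ and apply Claim~\ref{claim: root to literal path} to $q'$, obtaining two more cross edges from the distinct $q'$-gadget. (c) If $c$ is three-literal with $\deg_T(c)=2$, let $s$ denote the third literal of $c$, so $(c,s)\notin T$; either $s\in C_p$, in which case $(c,s)$ itself lies in $\cross{G,T}{e}$ and adds $1$, or $s\in C_c$, in which case $e$ still lies on the $r$-to-$s$ path in $T$ (because $r\in C_p$ and $s\in C_c$) and Claim~\ref{claim: root to literal path} applied to $s$ produces two more cross edges from $s$'s gadget, distinct from $q$'s since literals within a clause come from different variables.

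In every sub-case $\cng{G,T}{e}\ge K+1$, yielding the desired contradiction and hence $\deg_T(c)\le 1$. The main subtlety I anticipate is the last sub-case: one has to recognize that $s\in C_c$ still places $r$ and $s$ on opposite sides of $e$ so that Claim~\ref{claim: root to literal path} applies to $s$, and that the resulting cross edges are automatically disjoint from those Claim~\ref{claim: root to literal path} already supplied for $q$ because the (2P1N)-SAT structure guarantees $q$ and $s$ lie in different variable gadgets.
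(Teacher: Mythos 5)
Your proof is correct and follows essentially the same route as the paper's: it examines the congestion of the edge joining $c$ to its parent literal, uses Claim~\ref{claim: root to literal path} on a child literal of $c$ to reach $k_2+2=K$, and then finds the one extra unit by the same case analysis on the clause's arity and its degree in $T$. Your final sub-case is in fact handled slightly more carefully than in the paper, which simply asserts that the non-tree edge to the third literal contributes $1$ to the congestion without checking on which side of $e$ that literal vertex falls; you correctly note that if it falls below $c$ then Claim~\ref{claim: root to literal path} supplies two cross edges from its (distinct) gadget instead.
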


\medskip

To prove Claim~\ref{claim: clause vertex},
suppose there is a clause $c$ that is not a leaf. Then, by Claim~\ref{claim: root-to-clause edge},
$c$ has at least two clause-to-literal edges in $T$, say $(c, \tildex_i)$ and $(c, \tildex_j)$. 
We can assume that the last edge on the $r$-to-$c$ path in $T$ is $e = (c, \tildex_i)$. 
Clearly, $r \in T_{\tildex_i,c}$ and $\tildex_j \in T_{c,\tildex_i}$. 
By Claim~\ref{claim: root to literal path}, at least two edges of the $x_j$-gadget are in $\cross{G,T}{e}$, and they contribute at least $2$ to $\cng{G,T}{e}$. 
We now have some cases to consider.

If $c$ is a two-literal clause, its root-to-clause edge $(r, c)$ is also in $\cross{G,T}{e}$, by Claim~\ref{claim: root-to-clause edge}. 
Thus, $\cng{G,T}{e}\ge k_2+3 > K$ (see Figure~\ref{fig: claim clause vertices}a). 
So assume now that $c$ is a three-literal clause, and let $\tildex_l \neq \tildex_i, \tildex_j$ be the third literal of $c$.
If $T$ contains $(c,\tildex_l)$, the $x_l$-gadget would also contribute at least $2$ to $\cng{G,T}{e}$, so $\cng{G,T}{e} \ge k_2+4 > K$ (see Figure~\ref{fig: claim clause vertices}b). 
Otherwise, $(c,\tildex_l)\notin T$, and $(c,\tildex_l)$ itself contributes $1$ to $\cng{G,T}{e}$, 
so  $\cng{G,T}{e} \ge k_2+3 > K$ (see Figure~\ref{fig: claim clause vertices}c). 

We have shown that if a clause vertex $c$ is not a leaf in $T$, then in all cases
the congestion of $T$ would exceed $K$, completing the proof of Claim~\ref{claim: clause vertex}.

\begin{figure}[ht]
	\begin{center}
		\includegraphics[height = 1.56in]{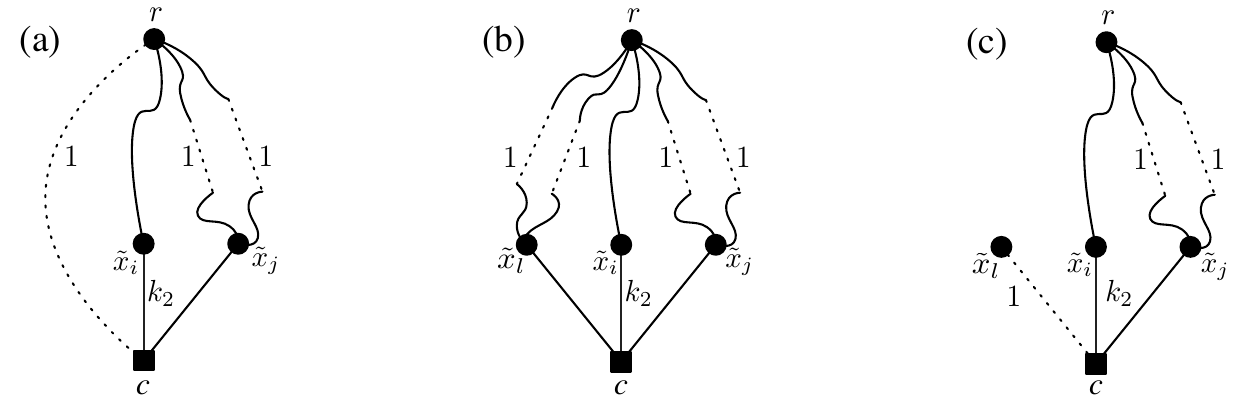}
	\end{center}
	\caption{Illustration of the proof of Claim~\ref{claim: clause vertex}. 
	In~(a) $c$ is a two-literal clause; in~(b) and~(c), $c$ is a three-literal clause.}
	\label{fig: claim clause vertices}
\end{figure}

%

\medskip

\begin{claim}\label{claim: root to xi'}
	For each $x_i$-gadget, edge $(r,x_i')$ is in $T$. 
\end{claim}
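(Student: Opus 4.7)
The plan is to argue by contradiction. Suppose $(r, x_i') \notin T$; then the weight-$k_3$ edge $(r, x_i')$ contributes $k_3$ to the congestion of every edge on the $r$-to-$x_i'$ path in $T$, and my target will be the edge $(x_i, x_i')$. Since that edge already contributes $k_2$ from its double weight to its own congestion, only a modest additional contribution will suffice to force $\cng{G, T}{x_i, x_i'} > K$.

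First I would show $(x_i, x_i') \in T$. Vertex $x_i'$ is adjacent in $G$ only to $r$, $x_i$, and the unique clause vertex $c''$ whose clause contains the literal $x_i'$. Since $(r, x_i') \notin T$ by assumption and $c''$ is a leaf of $T$ by Claim~\ref{claim: clause vertex}, if $(x_i, x_i')$ were also absent from $T$, then $x_i'$ (together with $c''$, in the sub-case $(c'', x_i') \in T$) would form a $T$-component disjoint from $r$, contradicting that $T$ spans $G$.

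Next I would pin down the component $T_{x_i', x_i}$ of $T \setminus \braced{(x_i, x_i')}$ containing $x_i'$. By the same adjacency observation, and using Claim~\ref{claim: clause vertex} to prevent the tree from extending past $c''$ into any other gadget, this component is either $\braced{x_i'}$ or $\braced{x_i', c''}$. Then I would enumerate the cross-edges of $(x_i, x_i')$: the edge itself contributes $k_2$; the edge $(r, x_i')$ contributes $k_3$ (as it is not in $T$); and at least one further cross-edge contributes at least $1$, namely $(c'', x_i')$ when $T_{x_i', x_i} = \braced{x_i'}$, or any of the other clause-to-literal edges of $c''$ (existing since $|c''| \ge 2$) when $T_{x_i', x_i} = \braced{x_i', c''}$; in the latter sub-case, if $c''$ is a two-literal clause, the root-to-clause edge $(r, c'')$ adds a further $k_1$, using Claim~\ref{claim: root-to-clause edge} to see that $(r, c'') \notin T$.

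In every sub-case, $\cng{G, T}{x_i, x_i'} \ge k_2 + k_3 + 1 = 2K - 4$, which strictly exceeds $K$ whenever $K \ge 5$, yielding the desired contradiction. The main obstacle is making the case analysis on $T_{x_i', x_i}$ airtight: Claim~\ref{claim: clause vertex} is used twice (first to force $(x_i, x_i') \in T$, then to forbid the tree from reaching beyond $c''$ into other gadgets), and Claim~\ref{claim: root-to-clause edge} handles the two-literal sub-case; once these are in place, the congestion lower bound reduces to a short edge count.
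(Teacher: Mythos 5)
Your proof is correct and follows essentially the same route as the paper: both target the edge $(x_i,x_i')$, force it into $T$ using Claim~\ref{claim: clause vertex}, and derive the contradiction from the cross-edge count $k_2+k_3+1=2K-4>K$. One minor slip: since $(r,c'')\notin T$, that edge contributes its first weight $1$ (not $k_1$) to the congestion of $(x_i,x_i')$ under the double-weight convention, but this extra term is not needed for your bound anyway.
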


\medskip

Towards contradiction, suppose that $(r, x_i')$ is not in $T$.
Let $(x_i',c)$ be the clause-to-literal edge of $x_i'$. 
If only one of the two edges $(x_i',x_i), (x_i',c)$ is in $T$, making $x'_i$ a leaf, then the congestion of that edge is $k_3 + k_2 + 1 > K$. 
Otherwise, both $(x_i',x_i), (x_i',c)$ are in $T$. 
Because $c$ is a leaf in $T$ by Claim~\ref{claim: clause vertex}, $e= (x_i, x_i')$ is the last edge on the $r$-to-$x_i'$ path in $T$. 
As shown in Figure~\ref{fig: claim 4 and 5}a, $\cng{G,T}{e} \ge k_3+k_2+2 > K$. This proves Claim~\ref{claim: root to xi'}.

%

\medskip

\begin{claim}\label{claim: root to y}
	For each $x_i$-gadget, edge $(r,y_i)$ is in $T$. 
\end{claim}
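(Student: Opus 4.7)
The plan is to argue by contradiction: assume $(r, y_i) \notin T$ and derive $\cng{G,T}{e} > K$ for the edge $e = (r, x_i')$, which is in $T$ by Claim~\ref{claim: root to xi'}. The first step is a structural observation about $T_{x_i', r}$, the component of $T - e$ that contains $x_i'$. The only edges of $G$ leaving the $x_i$-gadget that do not go to $r$ are the three clause-to-literal edges incident to $\barx_i$, $x_i$, and $x_i'$; but clause vertices are leaves in $T$ by Claim~\ref{claim: clause vertex}, so none of these routes can reconnect the gadget to $r$ after deleting $e$. Since $(r, y_i) \notin T$ by assumption, the entire gadget---in particular every literal vertex---must lie in $T_{x_i', r}$.

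Next I tally cross-edges to lower-bound $\cng{G,T}{e}$. The tree edge $e$ itself contributes $k_3$, and the non-tree edge $(r, y_i)$, whose endpoints lie on opposite sides of $e$, contributes $k_4$. For each of the three distinct clauses $c_1$, $c'$, $c''$ containing the literals $\barx_i$, $x_i$, and $x_i'$ respectively (distinct because the literals in a single clause come from different variables), I claim the contribution to $\cng{G,T}{e}$ is at least $1$, via the following dichotomy. If the clause's tree edge does not enter the $x_i$-gadget, then the corresponding clause-to-literal edge itself is non-tree and crosses $e$, contributing $1$. Otherwise, the clause vertex lies in $T_{x_i', r}$, and its remaining two edges---to literals in other gadgets, together with $(r, c)$ in the two-literal case---all reach vertices in $T_{r, x_i'}$. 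Here I rely on Claim~\ref{claim: root to xi'} applied to other gadgets, so each remains connected to $r$ independently of $e$, and on Claim~\ref{claim: root-to-clause edge} to rule out $(r, c) \in T$; this case actually contributes $2$, and in either case the clause contributes at least $1$.

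Combining, $\cng{G,T}{e} \ge k_3 + k_4 + 3 = 2K - 4$, which strictly exceeds $K$ for every $K \ge 5$, contradicting $\cng{G}{T} \le K$. The delicate step is the per-clause accounting: it uses the earlier claims to locate every relevant clause vertex in $T - e$ and to conclude that each of the three clauses intersecting the $x_i$-gadget contributes at least one cross-edge to $\cng{G,T}{e}$, regardless of how the rest of $T$ is configured.
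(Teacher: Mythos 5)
Your proof is correct and follows essentially the same route as the paper: assume $(r,y_i)\notin T$, observe via Claims~\ref{claim: clause vertex} and~\ref{claim: root to xi'} that the whole gadget lies in $T_{x_i',r}$ while literals of other variables do not, and then charge $e=(r,x_i')$ with $k_3$ from itself, $k_4$ from the non-tree edge $(r,y_i)$, and at least $1$ per clause meeting the gadget via the same in-tree/out-of-tree dichotomy, giving $k_3+k_4+3>K$. No gaps.
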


\medskip

To prove this claim, suppose $(r, y_i)$ is not in $T$. 
We consider the congestion of the first edge $e$ on the $r$-to-$y_i$ path in $T$. 
By Claims~\ref{claim: clause vertex} and~\ref{claim: root to xi'}, we have $e = (r, x_i')$, 
all vertices of the $x_i$-gadget have to be in $T_{x_i', r}$, and $T_{x_i', r}$ does not contain literal vertices of another variable $x_j \ne x_i$. 
For each literal $\tildex_i$ of $x_i$,
if a clause-to-literal edge $(c, \tildex_i)$ is in $T$, then the two other edges of $c$ contribute $2$ to $\cng{G,T}{e}$, 
otherwise $(c, \tildex_i)$ contributes $1$ to $\cng{G,T}{e}$. 
Then, $\cng{G,T}{e} \ge k_4+k_3+3 > K$ (see Figure~\ref{fig: claim 4 and 5}b), proving Claim~\ref{claim: root to y}.

\begin{figure}[ht]
	\begin{center}
		\includegraphics[height = 1.8in]{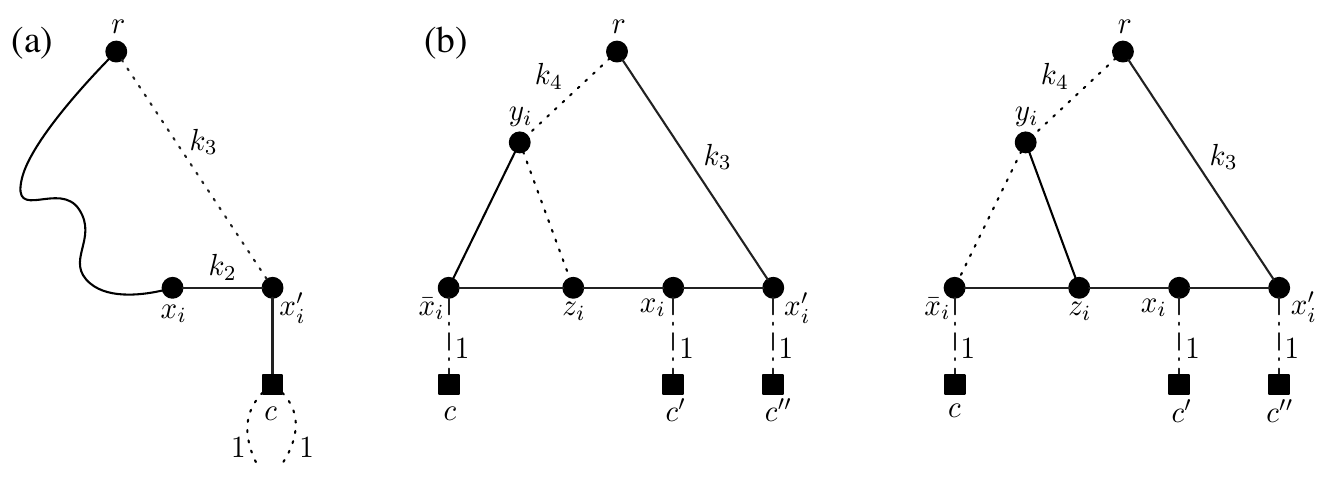}
	\end{center}
	\caption{(a) Illustration of the proof of Claim~\ref{claim: root to xi'}. 
	(b) Illustration of the proof of Claim~\ref{claim:  root to y}. Dot-dashed lines are edges that may or may not be in $T$.}
	\label{fig: claim 4 and 5}
\end{figure}

%

\medskip

\begin{claim} \label{claim: z to x}
	For each $x_i$-gadget, exactly one of edges $(z_i, x_i)$ and $(x_i, x_i')$ is in $T$. 
\end{claim}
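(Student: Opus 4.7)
The plan is to prove the two inclusions ``at least one'' and ``at most one'' separately, both by contradiction. For the ``at least one'' direction, assume neither $(z_i,x_i)$ nor $(x_i,x'_i)$ is in $T$. The only remaining edge of $G$ incident to $x_i$ is then the clause-to-literal edge $(c',x_i)$, where $c'$ is the clause containing the literal $x_i$. Since Claim~\ref{claim: clause vertex} makes $c'$ a leaf of $T$, even if $(c',x_i)$ is $c'$'s sole tree edge, the component of $T$ containing $x_i$ has at most two vertices and does not reach $r$, contradicting that $T$ is a spanning tree.

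For the ``at most one'' direction I would assume both $(z_i,x_i)$ and $(x_i,x'_i)$ are in $T$ and derive a contradiction by lower-bounding the congestion of the cut edge $e=(x_i,x'_i)$. Claims~\ref{claim: root to xi'} and~\ref{claim: root to y} place $(r,x'_i)$ and $(r,y_i)$ in $T$, so the cycle $r\!-\!x'_i\!-\!x_i\!-\!z_i\!-\!y_i\!-\!r$ forces $(y_i,z_i)\notin T$. With $T_{x_i,x'_i}$ denoting the component of $T\setminus\{e\}$ containing $x_i$, I already have $z_i\in T_{x_i,x'_i}$; the cut edge $e$ contributes $k_2$ to its own congestion, and $(y_i,z_i)$ is a non-tree cross edge of weight $k_4$, for a running total of $k_2+k_4=2K-6$.

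I would then split on whether $\bar x_i\in T_{x_i,x'_i}$. If yes, then $(\bar x_i,z_i)\in T$ (the only route inside the gadget from $\bar x_i$ to $x_i$ that avoids $e$), and $(y_i,\bar x_i)\notin T$ is a cross edge of weight $1$; if no, then $(\bar x_i,z_i)\notin T$ is itself a cross edge of weight $1$. In both sub-cases the clause $c'$ containing $x_i$ supplies a further $1$: either $(c',x_i)\notin T$ and crosses directly, or $(c',x_i)\in T$ and then, by Claim~\ref{claim: clause vertex}, the leaf $c'$ sits inside $T_{x_i,x'_i}$, so at least one of its remaining incident edges---a clause-to-literal edge into another gadget, or (for two-literal clauses) the edge $(r,c')$---must cross and contribute at least $1$. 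Summing yields $\cng{G,T}{e}\ge k_2+k_4+2 = 2K-4 > K$ for every $K\ge 5$, the desired contradiction. The step I expect to be the main obstacle is this last clause bookkeeping: the uniform lower bound of $1$ per clause hinges on the dichotomy ``either the literal edge itself crosses, or it is a tree edge and one of the clause's other incident edges crosses''.
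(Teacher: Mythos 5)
Your proof is correct and follows essentially the same strategy as the paper's: the ``at least one'' direction is the same connectivity argument, and the ``at most one'' direction is the same congestion-counting contradiction, except that you bound the congestion of $e=(x_i,x_i')$ (own weight $k_2$, plus $k_4$ from $(y_i,z_i)$, plus $1+1$) while the paper uses $e=(r,x_i')$ (own weight $k_3$, plus $k_4$, plus $2+1$ via Claim~\ref{claim: root to literal path}); both yield the same bound $2K-4>K$. Your explicit case split on the location of $\barx_i$ and the clause dichotomy you flag at the end are sound, since clause vertices are leaves and hence the component $T_{x_i,x_i'}$ contains no literal vertices of other gadgets.
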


\medskip

By Claims~\ref{claim: root to xi'} and \ref{claim: root to y}, edges $(r,y_i)$ and $(r,x_i')$ are in $T$. 
Since the clause neighbor $c'$ of $x_i$ is a leaf of $T$, by Claim~\ref{claim: clause vertex}, if none of $(z_i, x_i)$, $(x_i, x_i')$ were in $T$, $x_i$ would not be reachable from $r$ in $T$. 
Thus, at least one of them is in $T$.
Now, assume both $(z_i, x_i)$ and $(x_i, x_i')$ are in $T$  (see Figure~\ref{fig: claim 6 and 7}a). 
Then, edge $(y_i, z_i)$ is not in $T$, as otherwise we would create a cycle. 
Let us consider the congestion of edge $e = (r,x_i')$. 
Clearly, $x_i$ and $x_i'$ are in $T_{x_i', r}$. 
The edges of the two clause neighbors $c'$ and $c''$ of $x_i$ and $x_i'$ contribute at least $2$ to $\cng{G,T}{e}$, by Claim~\ref{claim: clause vertex}. 
In addition, by Claim~\ref{claim: root to literal path}, besides $e$ and $(y_i, z_i)$, $\cross{G,T}{e}$ contains another edge of the $x_i$-gadget which contributes at least another $1$ to $\cng{G,T}{e}$. 
Thus, $\cng{G,T}{e} \ge k_4+k_3+3 > K$ --- a contradiction. 
This proves Claim~\ref{claim: z to x}.

%

\medskip

\begin{claim}\label{claim: y to z}
	For each $x_i$-gadget, edge $(y_i, z_i)$ is in $T$. 
\end{claim}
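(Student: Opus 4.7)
The approach is by contradiction: assume $(y_i,z_i)\notin T$ and show that some tree edge must then have congestion exceeding $K$. By Claim~\ref{claim: z to x}, exactly one of $(z_i,x_i),(x_i,x_i')$ lies in $T$, giving two sub-cases; in both I first pin down the remaining tree edges inside the gadget by a connectivity argument, and then bound $\cng{G,T}{e}$ from below for $e=(y_i,\barx_i)$.

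The structural step is purely topological. Since $(y_i,z_i)\notin T$, the vertex $z_i$ can only be attached to the tree via $(\barx_i,z_i)$ or $(z_i,x_i)$; when $(z_i,x_i)\in T$ (first sub-case, with $(x_i,x_i')\notin T$) the pair $\{x_i,z_i\}$ is otherwise isolated apart from the clause leaf at $x_i$, forcing $(\barx_i,z_i)\in T$; when $(x_i,x_i')\in T$ (second sub-case, with $(z_i,x_i)\notin T$) the edge $(\barx_i,z_i)$ is $z_i$'s only remaining option. In both cases $\barx_i$ must then reach $r$, and since its clause neighbour is a leaf of $T$ by Claim~\ref{claim: clause vertex}, the only viable remaining edge is $(y_i,\barx_i)\in T$.

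With this structure in hand, I consider the congestion of $e=(y_i,\barx_i)$, which contributes $k_2$ to itself (weight $\dw{1}{k_2}$, in $T$). Since $(\barx_i,z_i)\in T$, $z_i$ lies in $T_{\barx_i,y_i}$, so $(y_i,z_i)\in E\setminus T$ crosses the cut defined by $e$ and contributes its full weight $k_4$. By Claim~\ref{claim: root to literal path} applied to a literal of $x_i$ whose $r$-path in $T$ uses $e$, $\cross{G,T}{e}$ contains at least one further gadget edge distinct from $e$ and from $(y_i,z_i)$; this is $(x_i,x_i')$ in the first sub-case and $(z_i,x_i)$ in the second, each contributing $1$. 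Finally, the clause $c$ containing $\barx_i$ always contributes at least one unit: if $(c,\barx_i)\notin T$ then it crosses the cut directly; if $(c,\barx_i)\in T$ then the two other edges of $c$ both cross the cut (with the root-to-clause edge handled by Claim~\ref{claim: root-to-clause edge} when $c$ is two-literal), giving contribution $2$. In the first sub-case the same dichotomy applied to $x_i$'s clause yields yet another unit. Summing, $\cng{G,T}{e}\ge k_2+k_4+3=2K-3$ in the first sub-case and $\cng{G,T}{e}\ge k_2+k_4+2=2K-4$ in the second, both strictly greater than $K$ for $K\ge 5$.

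The delicate point is placing clause vertices correctly relative to the cut at $e$. The clean observation is that any clause vertex $c''$ belonging to a literal outside the $x_i$-gadget necessarily sits on the $y_i$-side: by Claim~\ref{claim: clause vertex}, $c''$ is a leaf, and its unique tree edge either attaches to a literal in some independent variable gadget (which reaches $r$ through its own $(r,x'_j)$ and $(r,y_j)$ edges, both on the $y_i$-side) or to $x_i'$ (which lies in $T_{y_i,\barx_i}$ by Claim~\ref{claim: root to xi'}). With this observation in place, only the clause edges incident to $\barx_i$ and $x_i$ need to be accounted for, and the dichotomy above handles them cleanly.
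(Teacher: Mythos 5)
Your proposal is correct and follows essentially the same route as the paper: assume $(y_i,z_i)\notin T$, use Claims~\ref{claim: clause vertex} and~\ref{claim: z to x} to force $(\barx_i,z_i),(y_i,\barx_i)\in T$, and then derive the contradiction $\cng{G,T}{y_i,\barx_i}\ge k_2+k_4+2>K$ from the self-contribution $k_2$, the crossing edge $(y_i,z_i)$ of weight $k_4$, one further gadget edge, and the clause of $\barx_i$. Your extra bookkeeping (the explicit sub-case split and the additional unit from $x_i$'s clause in the first sub-case) is sound but not needed, since $k_2+k_4+2=2K-4>K$ already suffices for $K\ge 5$.
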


\medskip

By Claims~\ref{claim: root to xi'} and \ref{claim: root to y}, the two edges $(r,x_i')$ and $(r,y_i)$ are in $T$. 
Now assume, towards contradiction, that $(y_i, z_i)$ is not in $T$ (see Figure~\ref{fig: claim 6 and 7}b). 
By Claim~\ref{claim: z to x}, only one of $(z_i, x_i)$ and $(x_i, x_i')$ is in $T$. 
Furthermore, the clause neighbor $c'$ of $x_i$ is a leaf of $T$, by Claim~\ref{claim: clause vertex}.
As a result, $(z_i, x_i)$ cannot be on the $y_i$-to-$z_i$ path in $T$. 
To reach $z_i$ from $y_i$, the two edges  $(y_i, \barx_i), (\barx_i, z_i)$ have to be in $T$. 
Let us consider the congestion of $e = (y_i, \barx_i)$. 
The edges of the clause neighbor $c$ of $\barx_i$ contribute at least $1$ to the congestion of $e$, by Claim~\ref{claim: clause vertex}. 
Also, by Claim~\ref{claim: root to literal path}, besides $e$ and $(y_i, z_i)$, $\cross{G,T}{e}$ contains another edge of the $x_i$-gadget which contributes at least $1$ to $\cng{G,T}{e}$. 
In total, $\cng{G,T}{e} \ge k_4+k_2+2 > K$, reaching a contradiction and completing the proof of Claim~\ref{claim: y to z}.

\begin{figure}[ht]
	\begin{center}
		\includegraphics[height = 1.6in]{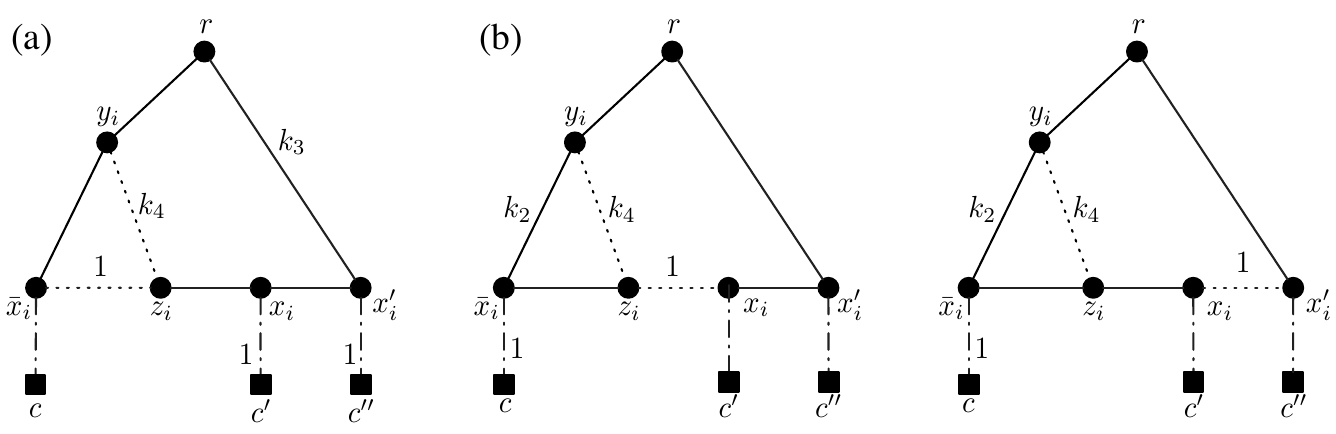}
	\end{center}
	\caption{(a) Illustration of the proof of Claim~\ref{claim: z to x}. (b) Illustration of the proof of Claim~\ref{claim: y to z}.}
	\label{fig: claim 6 and 7}
\end{figure}
%

\medskip

\begin{claim}\label{claim: conflict assignment}
	For each $x_i$-gadget, if its clause-to-literal edge $(\barx_i, c)$ is in $T$, then its other two clause-to-literal 
	edges $(x_i, c')$ and $(x_i', c'')$ are not in $T$. 
\end{claim}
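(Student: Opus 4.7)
My approach is to combine the earlier structural claims to pin down the possible shapes of $T$ restricted to an $x_i$-gadget, and then, in each residual case, use a carefully chosen tree edge whose congestion will overflow $K$ whenever $(x_i,c')$ or $(x_i',c'')$ is in $T$.

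First, I would apply Claims~\ref{claim: root to xi'}, \ref{claim: root to y}, \ref{claim: y to z}, and~\ref{claim: z to x} to fix the skeleton: $(r,x_i'),(r,y_i),(y_i,z_i)\in T$, and exactly one of $(z_i,x_i),(x_i,x_i')$ is in $T$. Since $c$ is a leaf (Claim~\ref{claim: clause vertex}) and $(\barx_i,c)\in T$ by hypothesis, $\barx_i$ must reach $r$ through the gadget, which forces exactly one of $(\barx_i,z_i)$ and $(y_i,\barx_i)$ to be in $T$ (both would create a $y_i z_i \barx_i$ cycle). This leaves exactly four configurations to analyze.

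Next, I would immediately rule out the two configurations with $(y_i,\barx_i)\in T$ by computing $\cng{G,T}{y_i,\barx_i}$. The edge itself contributes $k_2$, the non-tree edge $(z_i,\barx_i)$ contributes $1$, and, since $c$ is a leaf attached to $\barx_i$, the two remaining edges at $c$ (two other clause-to-literal edges, or one clause-to-literal edge plus $(r,c)$ by Claim~\ref{claim: root-to-clause edge} for a two-literal clause) contribute $2$, totalling $k_2+3=K+1$. I would then dispose of the configuration $(z_i,x_i),(z_i,\barx_i)\in T$ by looking at $\cng{G,T}{y_i,z_i}$: the edge itself gives $k_4$; the two non-tree edges $(y_i,\barx_i)$ and $(x_i,x_i')$ cross it (since $x_i'$ sits on the $r$-side via $(r,x_i')$) and contribute $2$; the clause $c$ contributes $2$; and $x_i$ contributes at least one more unit (either $(x_i,c')$ is a cross non-tree edge of weight $1$, or $(x_i,c')\in T$ and then the two other edges at the leaf $c'$ each cross, contributing $2$). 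This yields at least $k_4+5=K+1$, a contradiction.

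The only surviving configuration is $(x_i,x_i')\in T$, $(z_i,\barx_i)\in T$, matching the ``$x_i=0$'' picture of Figure~\ref{fig: neg spanning tree}. Here I would finish with two independent short arguments. If $(x_i,c')\in T$, then on the tree edge $e=(x_i,x_i')$ we have $T_{x_i,x_i'}=\{x_i,c'\}$, so the cross edges are $e$ itself ($k_2$), the non-tree edge $(z_i,x_i)$ ($1$), and the two remaining edges at the leaf $c'$ ($2$), giving $k_2+3=K+1$. If $(x_i',c'')\in T$, then on the tree edge $e=(r,x_i')$ we have $T_{x_i',r}=\{x_i',x_i,c''\}$, and the cross edges $e$ ($k_3$), $(z_i,x_i)$ ($1$), $(x_i,c')$ ($1$, since by Case~(i) it is not in $T$), and the two remaining edges at the leaf $c''$ ($2$) sum to $k_3+4=K+1$. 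Either case contradicts $\cng{G}{T}\le K$, proving Claim~\ref{claim: conflict assignment}.

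The main obstacle is purely bookkeeping: one has to be careful about which side of the cut each vertex (especially the neighboring clause vertices of possibly unknown arity) lands on, since a clause vertex $c'$ may or may not be attached to the gadget through $(x_i,c')$, and this flips whether $(x_i,c')$ itself or the other two edges at $c'$ are counted as cross edges. Once the configurations are enumerated and the right splitting edge is chosen in each, the congestion inequalities are all routine.
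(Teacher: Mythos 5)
Your proposal is correct and follows essentially the same route as the paper: it eliminates $(y_i,\barx_i)\in T$ via $\cng{G,T}{y_i,\barx_i}=k_2+3$, eliminates $(z_i,x_i)\in T$ via $\cng{G,T}{y_i,z_i}\ge k_4+5$, and then derives the contradictions $\cng{G,T}{x_i,x_i'}\ge k_2+3$ and $\cng{G,T}{r,x_i'}\ge k_3+4$ for the two clause-to-literal edges, exactly as in the paper. The only difference is presentational (enumerating the four configurations up front rather than eliminating them sequentially), and all the congestion counts match.
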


\medskip

Assume the clause-to-literal edge $(\barx_i, c)$ of the $x_i$-gadget is in $T$. 
By Claim~\ref{claim: y to z}, edge $(y_i, z_i)$ is in $T$. 
If $(y_i, \barx_i)$ is also in $T$, edge $(\barx_i, z_i)$ cannot be in $T$, and it contributes $1$ to $\cng{G,T}{y_i, \barx_i}$. 
As shown in Figure~\ref{fig: claim5}a, $\cng{G,T}{y_i, \barx_i}  = k_2 + 3 > K$. 
Thus, $(y_i, \barx_i)$  cannot be in $T$. 
Since $c$ is a leaf of $T$, edge $(\barx_i, z_i)$ has to be in $T$, for otherwise $\barx_i$ would not be reachable from $r$. 
By Claim~\ref{claim: z to x}, one of edges $(z_i, x_i)$ and $(x_i, x_i')$ is in $T$. 
If $(z_i, x_i)$ is in $T$  (see Figure~\ref{fig: claim5}b), $\cng{G,T}{y_i, z_i} \ge k_4 + 5 > K$. 
Hence, $(z_i, x_i)$ is not in $T$, which implies that $(x_i, x_i')$ is in $T$. 

\begin{figure}[ht]
	\begin{center}
		\includegraphics[height = 1.8in]{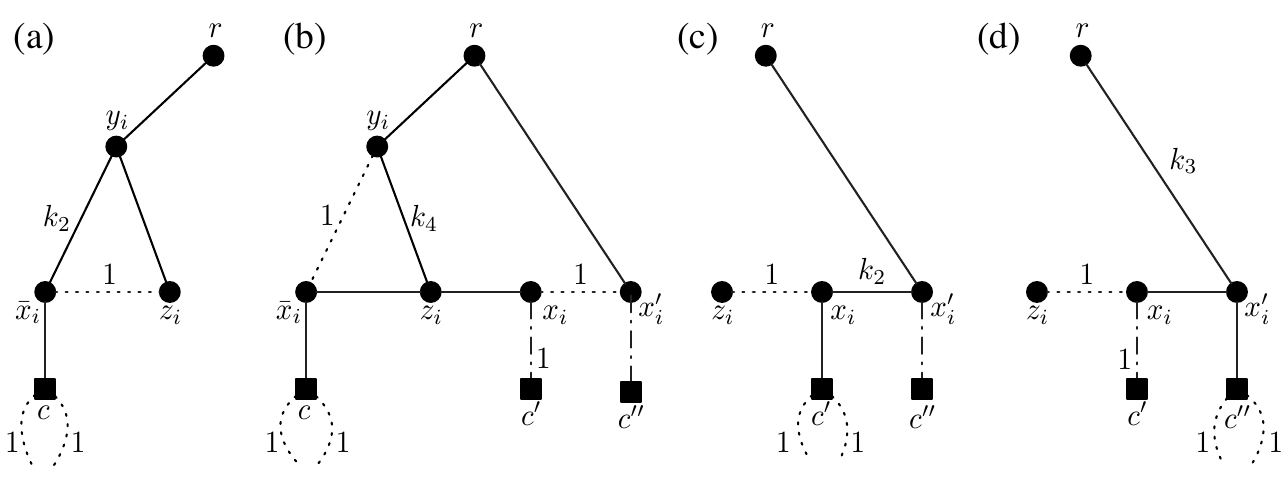}
	\end{center}
	\caption{Illustration of the proof of Claim~\ref{claim: conflict assignment}.}
\label{fig: claim5}
\end{figure}

\smallskip
Now, we proceed by contradiction assuming that at least one other clause-to-literal edge of the $x_i$-gadget is in $T$. 
If edge $(x_i, c')$ is in $T$, $\cng{G,T}{x_i, x_i'} \ge k_2 + 3 > K$, as shown in Figure~\ref{fig: claim5}c. 
Similarly, if $(x_i', c'')$ is in $T$, $\cng{G,T}{r, x_i'} \ge k_3 + 4 > K$ (see Figure~\ref{fig: claim5}d). 
So we reach a contradiction in both cases, thus proving Claim~\ref{claim: conflict assignment}. 

\medskip
We are now ready to complete the proof of the $(\Leftarrow)$ implication in the equivalence $(\ast)$.
We use our spanning tree $T$ of congestion at most $K$ to
create a truth assignment for $\phi$ by setting $x_i = 0$ if the clause-to-literal edge of $\barx_i$ is in $T$, otherwise $x_i = 1$. 
By Claim~\ref{claim: conflict assignment}, this truth assignment is well-defined.
Each clause has one clause-to-literal edge in $T$ which ensures that all clauses are indeed satisfied. 



\section{$\NP$-completeness proof of $\problemKSTC{K}$ for bipartite graphs of radius $2$ and $K \ge 6$}
\label{sec: k-stc hardness bipartite graph radius 2}
In this section we establish the following result:

\smallskip

\begin{theorem}
\label{theorem: kstc bipartite graph radius 2}
For any fixed integer $K \ge 6$, $\problemKSTC{K}$ is $\NP$-complete for bipartite graphs of radius $2$,
even if they have only one vertex of degree greater than $\max(6, K-2)$. 
\end{theorem}

\smallskip
First, we introduce a restricted variant of the satisfiability problem, which we name (M2P1N)-SAT, that is used in the reduction. An instance of (M2P1N)-SAT
is a boolean expression in conjunctive normal form with the following properties:   
\begin{itemize}[label=$\circ$]
	\item Each clause either contains three positive literals (a 3P-clause), or two positive literals (a 2P-clause), or two negative literals (a 2N-clause). 
		Also, literals in the same clause are of different variables. 
	\item Each variable appears exactly three times: once in a 3P-clause, once in a 2P-clause and once in a 2N-clause. 
	\item Two clauses share at most one variable. 
\end{itemize}

\smallskip

\begin{lemma}
	\label{lemma: M2P1N-SAT}
	 (M2P1N)-SAT is $\NP$-complete. 
\end{lemma}

\begin{proof}
It is clear that (M2P1N)-SAT belongs to $\NP$. 
To demonstrate $\NP$-completeness, we show a polynomial-time reduction from the $\NP$-complete problem
called BALANCED-3SAT~\cite{hagele_2001_complexity_scheduling_commercials}. BALANCED-3SAT is a restriction
of the satisfiability problem to boolean expressions in conjunctive normal form where, for each variable $x$,
the positive literal $x$ appears the same number of times as the negative literal $\barx$.
We can further assume that every variable appears at least four times, and that, for each clause, all variables
that appear in this clause are different. 

Given an instance $\psi$ of BALANCED-3SAT, we construct an instance $\phi$ of (M2P1N)-SAT as follows:
 
\begin{itemize}[label=$\circ$]
	\item For each variable $x$ in $\psi$, if $x$ appears $2t$ times (for some integer $t \ge 2$),  
			create $2t$ new variables $x_0, x_1, \ldots, x_{2t-1}$.
	\item Replace the $t$ positive occurrences of $x$ by even-indexed variables $x_0, x_2, \ldots, x_{2t-2}$, 
	and replace its $t$ negative occurrences by odd-indexed variables $x_1, x_3, \ldots, x_{2t-1}$. 
	\item Add $t$ clauses of the form $(x_i \lor x_{i+1})$ for $i = 0, 2, \ldots, 2t-2$, 
	and $t$ clauses of the form $(\barx_{i} \lor \barx_{(i+1) \!\! \mod 2t})$ for $i = 1, 3, \ldots, 2t-1$. 
\end{itemize} 

\smallskip
By the construction,  $\phi$ is a correct instance of (M2P1N)-SAT. 
For each variable $x$ of $\psi$, its corresponding ``cycle'' of the newly added two-literal clauses in $\phi$ ensures that 
$x_0 = \barx_1 = x_2 = \barx_3 = \ldots = x_{2t-2} = \barx_{2t-1}$. 
Thus, a truth assignment that satisfies $\psi$ can be converted into a truth assignment that satisfies $\phi$ by setting the even-indexed variables 
to the truth value of the original variable in $\psi$, and the odd-indexed variables to the opposite value.
Conversely, a truth assignment that satisfies $\phi$ can be converted into a truth assignment that satisfies $\psi$
by reversing this process. This shows
that $\psi$ is satisfiable if and only if $\phi$ is satisfiable, completing the proof of the lemma. 
\end{proof}


In order to prove Theorem~\ref{theorem: kstc bipartite graph radius 2}, we show a polynomial-time reduction from (M2P1N)-SAT. 
Given an instance $\phi$ of  (M2P1N)-SAT, we construct a graph $G$ such that

\smallskip
\begin{description}
	\item{$(\ast)$} $\phi$ has a satisfying truth assignment if and only if $\stc{G} \le K$. 
\end{description}

\smallskip
Graph $G$ will be bipartite, of radius $2$, and  will have only one vertex of degree larger than $\max(6, K-2)$. We will
describe $G$ using some double-weighted edges, that we refer to as \emph{fat edges}.
As previously discussed in Section~\ref{sec: preliminaries}, here we need a specific realization
of these double weighted edges, in which weights are realized using the spintop graph.
(See Figures~\ref{fig: single weight} and~\ref{fig: double weights}.)
For $i\in\braced{1,2,3,4,5}$, let $k_i = K- i$.
We start with an empty graph $G$ and proceed as follows:

\begin{itemize}[label=$\circ$]
	\item  Add a \emph{root vertex} $r$. 
	\item  For each variable $x$ of $\phi$, add a \emph{variable vertex} $x$ and a \emph{root-to-vertex} edge $(r, x)$.
	\item  For each clause $c$, add a clause vertex $c$, and add edges from $c$ to the 
			vertices representing variables whose literals (positive or negative) appear in $c$. 
	If clause $c$ contains all positive literals, we call its clause-to-variable edges \emph{positive-clause edges}, 
					otherwise its clause-to-variable edges are \emph{negative-clause edges}. 
	\item For each 2P-clause vertex $c$, add a fat edge $(r,c)$ of double weight $\dw{k_5}{k_1}$.
	\item For each 2N-clause vertex $c$, add a fat edge $(r,c)$ of double weight $\dw{k_4}{k_1}$.
\end{itemize} 

See Figure~\ref{fig: radius 2 graph and tree}a for an example of a partial graph constructed using the above rules.  
By routine inspection, taking into account that the weighted edges use the  spintop realization, $G$ 
is bipartite, all vertices are at distance at most $2$ from $r$, and $r$ is the only vertex of degree larger than $\max(6, K-2)$.
We now proceed to show that $G$ satisfies property~$(\ast)$.

\begin{figure}[ht]
	\begin{center}
	\includegraphics[height = 1.7in]{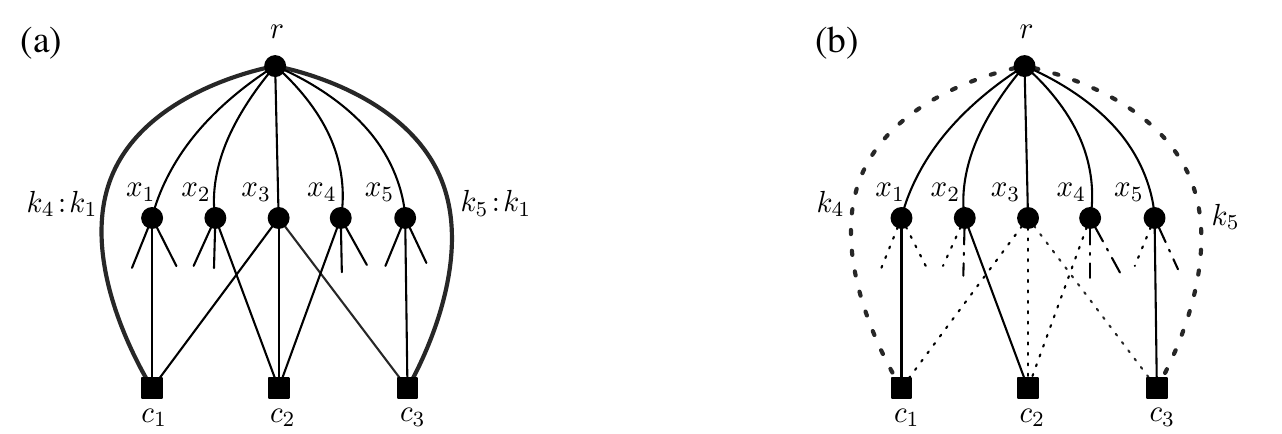}
	\end{center}
	\caption{(a) An example of a partial graph $G$ for $\phi = c_1 \land c_2 \land c_3 \land \cdots$ where 
	$c_1 =(\bar{x}_1 \lor \bar{x}_3), c_2 = (x_2 \lor x_3 \lor x_4), c_3 = (x_3 \lor x_5)$. Bold lines represent fat edges with given double weights.
	(b) An example of a partial tree $T$ of $G$ where $x_1$ is chosen by $c_1$, $x_2$ by $c_2$, $x_5$ is by $c_3$.
	Solid lines are tree edges, dotted lines are non-tree edges, and dot-dashed lines are edges that may or may not be in $T$. 
	Non-tree double-weighted edges contribute the indicated weights to edge congestion.}
	\label{fig: radius 2 graph and tree} 
\end{figure}

\medskip
\noindent
$(\Rightarrow)$ Assume that $\phi$ has a satisfying truth assignment.
From this assignment we construct a spanning tree $T$ of $G$ by adding all root-to-vertex edges, and,
for each clause $c$, adding to $T$ an edge from $c$ to any variable vertex whose literal satisfies $c$ (see Figure~\ref{fig: radius 2 graph and tree}b).  
By the construction, $T$ is a spanning tree of $G$. Note that all clause vertices in $T$ are leaves and all fat edges are non-tree edges.

Now, we proceed to verify that all tree edges of $T$ have congestion at most $K$. 
We start with leaf edges of $T$. The congestion of the leaf edge of a 3P-clause is $3$.
For a 2P-clause, the congestion of its leaf edge is $K-3$, because its fat edge contributes $k_5 = K-5$. 
For a 2N-clause, the congestion of its leaf edge is $K-2$, because its fat edge contributes $k_4 = K-4$. 

Next, consider the root-to-vertex edge of a variable $x_i$.
If $x_i$ is not chosen to satisfy any clauses, then $\cng{G,T}{r,x_i} = 4$  (see  Figure~\ref{fig: tree bipartite graphs}a). 
If it is chosen to satisfy only its 3P-clause, then $\cng{G,T}{r,x_i} = 5$  (see  Figure~\ref{fig: tree bipartite graphs}b). 
If it is chosen to satisfy only its 2P-clause, then $\cng{G,T}{r,x_i} = k_5 + 4 = K-1$  (see  Figure~\ref{fig: tree bipartite graphs}c). 
If it is chosen to satisfy both its 3P-clause and its 2P-clause, then $\cng{G,T}{r,x_i} = k_5 + 5  = K$  (see  Figure~\ref{fig: tree bipartite graphs}d). 
Finally, if it is chosen to satisfy its 2N-clause, then $\cng{G,T}{r,x_i} = k_4 + 4 = K$  (see  Figure~\ref{fig: tree bipartite graphs}e). 

There are also edges inside the realizations of fat edges, but their congestion does not exceed $K$, by Lemma~\ref{lemma: double weights}.
We have thus shown that the congestions of all edges in $T$ are at most $K$; that is, $\stc{G} \le K$.

\begin{figure}[ht]
	\begin{center}
	\includegraphics[height = 1.3in]{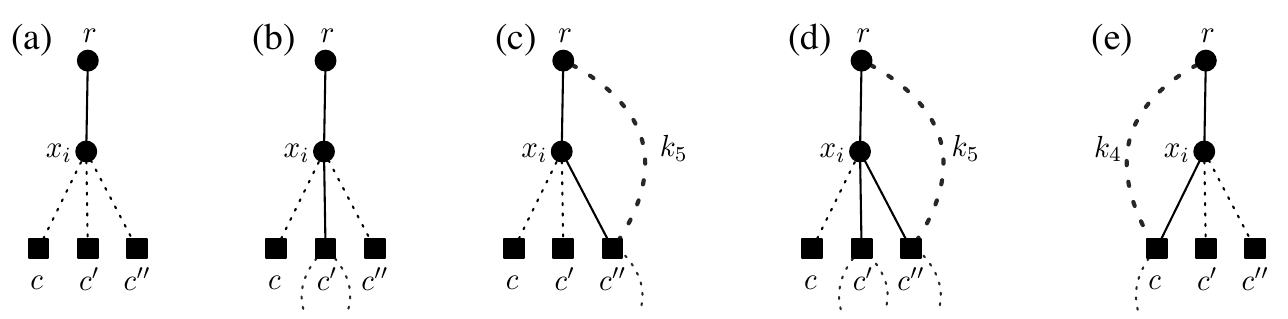}
	\end{center}
	\caption{By $c, c', c''$, we denote the 2N-clause, 3P-clause and 2P-clause of $x_i$ respectively. 
	In (a), $x_i$ is not chosen by any clause, it is chosen by $c'$ in (b), by $c''$ in (c), by both $c'$ and $c''$ in (d), and by $c$ in (e).}
	\label{fig: tree bipartite graphs}
\end{figure}

\smallskip
$(\Leftarrow)$ Assume $T$ is a spanning tree of $G$ with $\cng{G}{T} \le K$. 
From $T$, we will construct a satisfying truth assignment for $\phi$. The argument here, while much shorter, has a subtle
aspect that was not present in the proof in Section~\ref{sec: k-stc hardness proof}, namely
now it is not necessarily true that all clause vertices in $T$ are leaves. (It's not hard to see that for large $K$
a single branch out of $r$ may visit multiple variables via their 3P-clause vertices.)

We present two claims showing that $T$ must have a special form that will allow us to define the truth assignment for $\phi$. 


\smallskip

\begin{claim}
\label{claim: no fat edge}
For each two-literal clause $c$, its fat edge $(r,c)$ is not in $T$. 
\end{claim}

\smallskip

For each literal of $c$, there is an $r$-to-$c$ path via the variable vertex of this literal. So, 
together with edge $(r,c)$, $G$ has three disjoint $r$-to-$c$ paths. 
Thus, if $(r,c)$ were in $T$, its congestion would be at least $k_1+2>K$, proving Claim~\ref{claim: no fat edge}.


\smallskip

\begin{claim}
\label{claim: no conflict signs}
For each variable vertex $x_i$, if its negative-clause edge is in $T$ then its two positive-clause edges are not in $T$. 
\end{claim}

\smallskip

Denote by $c, c', c''$ the 2N, 3P, 2P-clause vertices of $x_i$ respectively. 
Since $c,c',c''$ all contain variable $x_i$, they cannot share any other variables (by the definition of (M2P1N)-SAT).
Therefore, the four literals in $c,c',c''$ other than $x_i$ and $\barx_i$ must all involve different variables. 

Toward contradiction, suppose $(x_i, c)$ and at least one of $(x_i, c'), (x_i, c'')$ are in $T$. 
We will estimate the congestion of the first edge $e = (r,v)$ on the $r$-to-$c$ path in $T$. 

By Claim~\ref{claim: no fat edge}, fat edge $(r,c)$ contributes $k_4$ to $\cng{G,T}{e}$.  
The rest of the argument is based on the following two observations: (i) If a clause $\tildec \in \braced{c, c',c''}$ is in $T_{v,r}$, 
and some variable $x$ is in $\tildec$, then either $(r,x)$ or $(x, \tildec)$ is in $\cross{G,T}{e}$; that is, this $x$ contributes $1$ to $\cng{G,T}{e}$.
(This is true whether or not $v = x$. And if $x = x_i$ and $\tildec = c$, then $(r,x_i)$ is the edge that contributes to $\cng{G,T}{e}$.)
On the other hand, (ii) if a clause $\tildec \in \braced{c', c''}$ is not in $T_{v,r}$, then $(x_i,\tildec)$ contributes $1$ to $\cng{G,T}{e}$.

Now we have some cases to consider. First, if $c' \in T_{v,r}$ and $c'' \notin T_{v,r}$, by the above observations, 
four different variables in $c, c'$ contribute $4$ to $\cng{G,T}{e}$ and $(x_i, c'')$ contributes $1$. In total, $\cng{G,T}{e} \ge k_4 + 4+1 > K$. 
On the other hand, when $c'' \in T_{v,r}$ and $c' \notin T_{v,r}$, the three different variables of $c, c''$ contribute $3$ while $(x_i, c')$ 
contributes $1$ to $\cng{G,T}{e}$.  
Also, the fat edge $(r,c'')$ contributes $k_5$, by Claim~\ref{claim: no fat edge}. Thus, $\cng{G,T}{e} \ge k_4 + k_5 + 3+1 > K$. 
Lastly, when both $c', c''$ are in $T_{v,r}$, the five different variables of $c, c', c''$ contribute to $\cng{G,T}{e}$,
so  $\cng{G,T}{e} \ge k_4 + 5 > K$. 
We have thus shown that the congestion of $e$ exceeds $K$ in all cases, completing the proof of Claim~\ref{claim: no conflict signs}.

\smallskip
We are now ready to describe the truth assignment for $\phi$ using $T$. 
For each variable $x_i$, assign $x_i = 0$ if its negative clause edge is in $T$, otherwise, $x_i=1$. 
By Claim~\ref{claim: no conflict signs}, the truth assignment is well-defined. 
By Claim~\ref{claim: no fat edge}, each clause vertex has at least one edge to a variable vertex, which ensures all clauses are satisfied. 
This completes the proof of Theorem~\ref{theorem: kstc bipartite graph radius 2}.


\section{Complexity results of $\problemKDSTC{K}{2}$}
\label{sec: k-dstc-2}

In this section, we consider problem $\problemKDSTC{K}{D}$ where, given a graph $G$, the objective is to determine
if $G$ has a depth-$D$ spanning tree of congestion at most $K$.
Here, as before, $K$ is a fixed positive integer.  We present the following results: 

\begin{theorem}
\label{thm: K2STC np-hardness}
For any fixed integer $K \ge 6$, $\problemKDSTC{K}{2}$ is  $\NP$-complete for bipartite graphs, even if they have only one vertex of degree greater than $\max(6, K-2)$. 
\end{theorem}

\begin{theorem}
\label{thm: K2STC algorithm}
For any fixed integer $K\le 5$, $\problemKDSTC{K}{2}$ is polynomial-time solvable for bipartite graphs. 
\end{theorem}

We remark that the complexity status of $\problemKDSTC{K}{2}$ is independent of whether the root of the spanning tree is specified
or not, because there are at most $n$ choices for $r$. This establish the equivalence of these two versions (with or without the root specified)
in terms of polynomial-time solvability or $\NP$-hardness.


\subsection{$\NP$-completeness proof of $\problemKDSTC{K}{2}$ for $K \ge 6$}
\label{subsec: NP-complete K2STC}

The proof of Theorem~\ref{thm: K2STC np-hardness} can be easily derived from the proof of Theorem~\ref{theorem: kstc bipartite graph radius 2} 
in Section~\ref{sec: k-stc hardness bipartite graph radius 2}.  The reduction remains unchanged.
In that construction, the bipartite partition of $G$ has two parts: $X$, which includes vertices adjacent to the root $r$ 
(the variable vertices and parts of the spintop gadgets), and 
$C \cup \braced{r}$, which includes the remaining vertices (the clause vertices, the root, and 
the vertices not adjacent to $r$ in the spintop gadgets).  
The proof for the forward direction is also identical, 
since the depth of the spanning tree generated from the proposed construction is already two.
 
For the reverse implication, suppose $T$ is the depth-two spanning tree with congestion at most $K$. 
We present a simple claim about the structure of $T$:

\begin{claim}
\label{claim: edges to first layer}
All edges incident to $r$ are in $T$, and all vertices in $C$ are leaves of $T$. 
\end{claim}

Since $G$ does not have any eccentricity-one vertex and the only vertex in $G$ of eccentricity two is $r$, $T$ has to be rooted at $r$, 
which implies that the paths from $r$ to other vertices in $T$ have length at most 2. 
If an edge $(r, x) \in G$ were not in $T$, the $r$-to-$x$ path in $T$ would have length at least 3, which is a contradiction. 
Thus, $T$ traverses all edges of $r$. 
The second part of the claim follows directly from the first part. 

In addition to Claim~\ref{claim: edges to first layer}, $T$ also has the two properties described in Claim~\ref{claim: no fat edge} (which can be established using the same argument) 
and Claim~\ref{claim: no conflict signs} (its proof can be made simpler by considering the fact about clause vertices being leaves of $T$).

Finally, the truth assignment for $\phi$ can be created the same way as in  Section~\ref{sec: k-stc hardness bipartite graph radius 2}.


\subsection{An algorithm for $\problemKDSTC{K}{2}$ in bipartite graphs for $K\le 5$}
\label{subsec: polynomial 5STC2}

We now prove Theorem~\ref{thm: K2STC algorithm}. We only give an explicit algorithm for $K=5$. This is because
$\problemKDSTC{K}{2}$ is trivial for $K=1$, and for $K=2$, the problem can be solved by a straighforward adaptation
of the algorithm in~\cite{otachi_2010_complexity_result_stc}, even for general graphs. 
The cases when $K=3,4$ can be handled by slightly modifying (in fact, simplifying) the algorithm for $K=5$ below.
(Alternatively, for $K=3$, one can adapt the algorithm from~\cite{otachi_2010_complexity_result_stc}.) 

So let's assume that $K=5$ and let $G$ be a given bipartite graph.
If $\radius{G} > 2$, then $G$ does not have any spanning tree of depth two. 
If $\radius{G} = 1$, then $G$ must be a complete bipartite graph where one partition contains only one vertex, that is $G$ itself is a tree of depth one and its congestion is one.  
Thus, we can assume $\radius{G} = 2$, which means that any depth-two spanning tree of $G$ has to be rooted at a vertex with eccentricity two. 
There are at most $n$ such vertices, and for each we can check, using the procedure described below,
whether there is a depth-two spanning tree $T$ rooted at $r$ such that $\cng{G}{T} \le 5$.  
Therefore from now on we can assume that this $r$ is already given.

Let $X$ and $C \cup \braced{r}$ be the two parts of the bipartition of $G$. 
Let $E_r$ be the set of edges incident to $r$, and $E_s = E \setminus E_r$. 
We can make the following assumptions (that can be implemented in a pre-processing stage):
\begin{itemize}
\item
We can assume that all vertices in $G$ have degree at least 2, since removing (repeatedly) degree-one vertices does not affect the spanning tree congestion of the graph. 
\item
By Claim~\ref{claim: edges to first layer}, each vertex $c \in C$ has to be a leaf in any depth-two spanning tree rooted at $r$, 
and the congestion of its leaf edge is equal to $\degree{G}{c}$.  
Thus, we can also assume that $\degree{G}{c}\le 5$ for all $c \in C$.
\item
Similarly, each edge $(r,x)$ must be in a spanning tree of depth two. 
With the assumptions above, each edge $(x,c)$ from $x$ to $c \in C$ contributes to the congestion of $(r,x)$, either directly, 
if it's not in the tree, or indirectly, if it's in the tree (as then the other edges from this $c$ contribute, and there is at least one). 
Therefore, if $\degree{G}{x} > 5$ for some $x \in X$, we would have $\cng{G,T}{r,x} > 5$. 
So we can assume that $\degree{G}{x} \le 5$ for all $x\in X$.
\end{itemize}


\myparagraph{Algorithm outline} 
The general idea of the algorithm is to start with a tree $T$ that contains only edges in $E_r$ and gradually add leaf edges for all vertices  $c \in C$. 
This can be naturally interpreted as assigning vertices in $C$ to vertices in $X$.
If $c \in C$ and $x \in \neighbor{G}{c}$, then assigning $c$ to $x$ means that edge $(c,x)$ is being added to $T$.
If it is possible to assign all vertices in $C$ to some vertices in $X$, while ensuring that the congestions of the edges in $E_r$ do not exceed 5, 
then $T$ will be the desired spanning tree. 
In the first phase, we will do this assignment one vertex at a time. 
Call the assignment $c\to x$ \emph{feasible} if it does not cause the current congestion of $(r,x)$ to exceed $5$. 
Such a feasible assignment can be made safely if it either is forced (say, if $c$ can be assigned to only one vertex in $X$ without exceeding the congestion bound), 
or it can be made without loss of generality (that is, if we can show that if
there is any spanning tree with congestion at most $5$, then there is also one that makes this specific assignment). 
To achieve this, we will carefully track the congestion of the edges in $E_r$ throughout the construction.  
The first phase will end with all yet unassigned vertices in $C$ of degree $3$ or $4$. Then the only way to complete 
the assignments is by adding a matching between $C$ and $X$, and this is done in the second phase.


\myparagraph{Phase~1}
Initially $T$ contains only the edges from $r$ to $X$. 
During the process, besides these edges, $T$ will also contain one edge $(c,x)$ for each $c \in C$ that is already assigned to $x \in \neighbor{G}{c}$. 
For this (not yet spanning) tree $T$, define the congestion of a vertex $x \in X$ in the current stage of $T$ as: 
\begin{equation}
\label{eqn: cng partial tree}
	\cng{}{r,x} \;=\; \degree{G}{x} + \sum_{c\to x} [\degree{G}{c} - 2]
\end{equation}
where the sum is over all $c\in C$ that are assigned to $x$. Thus,
when a vertex $c \in C$ get assigned to a vertex $x \in \neighbor{G}{c}$, the congestion of $(r, x)$ increases by $\degree{G}{c} - 2 \ge 0$. 
Note that after this assignment, $\cng{}{e}$ remains unchanged for $e \in E_r \setminus \braced{(r, x)}$ and the 
congestions of $(r, x)$ is non-decreasing. 


\emph{Assigning degree-$2$ vertices.}
For a vertex $c$ of degree 2, let $(x,c)$ be any of its edges, and assign $c$ to $x$.
The congestion of $(r,x)$ remains unchanged. 


\emph{Assigning degree-$5$ vertices.}
For a vertex $c$ of degree $5$, if we assign $c$ to a vertex $x$, the congestion of $(r,x)$ would increase by $3$. 
Therefore, $c$ can only be assigned to $x$ if the congestion of $(r,x)$ is $2$ prior to the assignment,
which implies that the only edge in $E_s$ that is incident to $x$ is $(c,x)$. 
Including $(c,x)$ in $T$ would not affect the congestion of $(r,x)$ in subsequent steps, as $c$ is the
only vertex in $C$ that can be assigned to $x$.
If there is no $x$ that satisfies the requirement, we terminate and report failure.
If there are multiple feasible choices for such $x$, we can choose any of them. 
This is valid, because if $x'\in X$ is another candidate, then $x'$ will not be assigned to any vertices in $C$ and the  congestion of $(r, x')$ will remain $2$.

\emph{Assigning pairs of degree-$3$ vertices to the same vertex.}
If there are two degree-$3$ vertices $c_1, c_2  \in C$ that share the same neighbor $x$, and $\neighbor{G}{x} = \braced{r, c_1, c_2}$, we can assign both $c_1$ and $c_2$ to $x$. 
The congestion of $(r,x)$ will increase to $5$, and it will remain $5$ since $x$ cannot be assigned to any other vertices in $C$. 
Similar to the previous step, if there is more than one such choice of $x$, any option is valid. 

\smallskip
\myparagraph{Phase~2}
After the first phase, we denote by $C'$ the set of yet unassigned vertices in $C$. 
The vertices in $C'$ have degree either $3$ or $4$. 
Unlike the previous phase, assignments for vertices in $C'$ cannot be made independently. 
We observe that  each of these vertices must be assigned to a different vertex in $X$ because assigning two or more of them to 
the same $x$ would cause the congestion of $(r,x)$ to exceed 5. 
(This is because after Phase~1, if two vertices in $C'$ share a neighbor in $X$ then they cannot both have degree $3$.)
Based on this observation, we can assume that $|X|\ge |C'|$ -- if not, we can report that the congestion is larger than $5$.
Then an assignment of all vertices in $C'$ forms a perfect matching between $C'$
and $X$, that is, a matching that covers all vertices in $C$ (but not necessarily in $X$). Our goal now is to find this matching.

Towards this end, we consider a bipartite subgraph $G'$ of $G$ where one partition consists of the vertices of $C'$, the other 
partition consists of the vertices in $X$, and an edge between $c\in C$ and $x\in X$ is included in $G'$ iff 
$x \to c$ is a feasible assignment. 
We then determine,  in polynomial-time~\cite{1973_hopcroft_algorithm_for_maximum_matchings}, whether $G'$
has a perfect matching.
This matching will define the assignments for vertices in $C'$, ensuring that after all assignments are made, the resulting $T$ is now a spanning tree with congestion at most 5. 
If there is no perfect matching, we report failure. 

%
%
%


\section{Polynomial-time solvability of $\problemDSTC{2}$ in bipartite graphs with vertex degree restrictions}
\label{sec: dstc-2}
Building upon Section~\ref{sec: k-dstc-2}, we continue to explore the variant of $\problemDSTC{2}$, which involves finding a depth-2 spanning tree with minimum congestion in bipartite graphs. 
We provide two polynomial-time algorithms for cases when vertex degrees are restricted:

\begin{theorem}
\label{thm: STC2 algorithm for 3-regular X}
$\problemDSTC{2}$ can be solved in polynomial time when all vertices in $X$ have degree at most 3. 
\end{theorem}

\begin{theorem}
\label{thm: STC2 algorithm for regular C}
$\problemDSTC{2}$ can be solved in polynomial time when all degrees in $C$ have the same degree.
\end{theorem}

To prove each theorem, given any positive integer $K$, we provide an algorithm to construct a depth-2 spanning tree $T$ with congestion at most $K$ (if such a tree exists). 
This implies the polynomial-time solvability of $\problemDSTC{2}$ in these cases. 
The proofs are given in Sections~\ref{subsec: STC2 matching} and~\ref{subsec: STC2 flow}, respectively.

We use the same notation and terminology as in Section~\ref{subsec: polynomial 5STC2}, and we adopt, without loss of 
generality, similar simplifying assumptions.  Let $G$ be the given bipartite graph.
We can assume that $\radius{G} = 2$ and the root $r$ of the desired spanning tree is given. 
We use $X$ and $C \cup \braced{r}$ to refer to the two partitions of the vertices of $G$, and $E_r$ to refer to the set of edges incident to $r$. 

Using the results described in Section~\ref{subsec: polynomial 5STC2}, we can solve $\problemKDSTC{K}{2}$ for $K \le 5$.  
Thus, we will assume $K \ge 6$. Also, as in Section~\ref{subsec: polynomial 5STC2},
we can assume that $2 \le \degree{G}{v} \le K$ for any $v \in C \cup X$ and $\degree{G}{r} \ge 2$. 

Both algorithms start with a tree $T$ that contains only edges in $E_r$. 
The goal is adding leaf edges for all vertices in $C$ while ensuring that the congestion of edges in $E_r$ does not exceed $K$. 
For a vertex $x \in X$, the congestion of edge $(r,x)$ in $T$ is defined in the same way as in Equation~\ref{eqn: cng partial tree}.


\subsection{$\problemKDSTC{K}{2}$ for bipartite graphs with all degrees in $X$ at most 3}
\label{subsec: STC2 matching}

We now present the proof of Theorem~\ref{thm: STC2 algorithm for 3-regular X}, namely a polynomial-time algorithm for
$\problemKDSTC{K}{2}$ restricted to bipartite graphs $G$ where the degree of the vertices in $X$ is at most 3. 
The general idea of this algorithm is similar to the $\problemKSTC{5}{2}$ algorithm described in Section~\ref{subsec: polynomial 5STC2}. 
The process consists of two phases: in the first phase we create assignments for vertices in $C$ that are adjacent to degree-2 vertices  in $X$.
Then, in the second phase, the remaining assignments are determined by a perfect matching in an auxiliary graph $H$ constructed in polynomial time from $G$. 
If there is no perfect matching in $H$, we report failure.

The two phases of the algorithm are as follows: 

\myparagraph{Phase 1: Assigning to degree-2 vertices}
For a vertex $x \in X$ with degree 2, we denote $\neighbor{G}{x} = \braced{r, c}$, we assign $c \to x$. 
This assignment is safe because the congestion of $(r,x)$ is equal to $\degree{G}{c}$, which is at most $K$ by assumption. 
Moreover, this $x$ cannot be assigned to any other vertices in $C$ which implies that $\cng{G,T}{r,x}$ will remain unchanged. 

\myparagraph{Phase 2: Assigning to degree-3 vertices}
After the first phase, the remaining vertices in $X$ that are available for assignments have degree 3. 
Let $X'$ be the set of such vertices, and $C'$ be the set of unassigned vertices in $C$. 
Unlike in the $\problemKSTC{5}{2}$ algorithm, we cannot directly use a matching from $C'$ to $X'$ to create feasible assignments 
because it is possible for two vertices in $C'$ to be assigned to the same vertex in $X$ (not allowed in the second phase of $\problemKSTC{5}{2}$ algorithm). 
However, we can still capture assigning a pair of vertices in $C'$ to the same vertex in $X'$ by matching this pair to themselves. 
To accomplish this, we reduce the assignment problem from $C'$ to $X'$ to finding a perfect matching in an auxiliary graph $H$ (not necessarily bipartite).
 
The vertices of $H$ consists of all vertices in $X' \cup C'$. In addition, if $|X' \cup C'|$ is odd, we also add $r$ to $H$.  
For each $c \in C'$,  we add an edge $(c,x)$ to $H$ where $x \in \neighbor{G}{c}$ if $\degree{G}{c} + 1 \le K$.  
This condition ensures that $c \to x$ is a feasible assignment.  
Furthermore, for each pair of vertices $c_1, c_2 \in C'$ that share the same neighbor $x \in X'$, if $\degree{G}{c_1} + \degree{G}{c_2} -1 \le K$, we add the edge $(c_1, c_2)$ to $H$. 
This condition is equivalent to $\cng{G,T}{r,x} \le K$ after both assignments $c_1 \to x$ and $c_2 \to x$ have been made. 
Finally, we add edges between any pair of vertices in $X'$ and, if $r$ is in $H$, we add edges from $r$ to all vertices in $X'$. 
Figure~\ref{fig: stc2 matching}a shows an example construction of $H$.  

\begin{figure}[ht]
	\begin{center}
	\includegraphics[height = 1.4in]{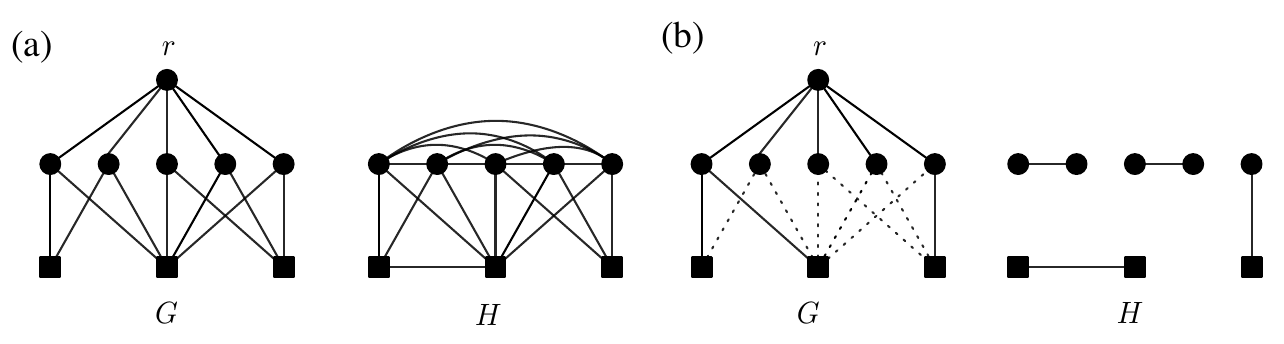}
	\end{center}
	\caption{(a) An example of $H$ constructed from $G$ for algorithm $\problemKDSTC{6}{2}$ in Phase~2. 
		(b) Assignments in $G$ built from a perfect matching in $H$.}
	\label{fig: stc2 matching}
\end{figure}

We proceed to find a maximum matching $M$ in $H$, which can be done in polynomial time~\cite{galil_1986_algorithms_maximum_matching}. 
If $M$ is a perfect matching, it will define assignments for all vertices in $C'$ such that these edges combined with the tree $T$ result in a 
spanning tree of congestion at most $K$ for the graph $G$. 
If $M$ is not a perfect matching, we report failure. 
The following lemma establishes the correctness of this phase:

\begin{lemma}
\label{lemma: STC2 matching}
There exists feasible assignments for all vertices in $C'$ if and only if $H$ has a perfect matching. 
\end{lemma}

\begin{proof}
$(\Rightarrow)$ 
Let $A$ denotes the assignments for vertices in $C'$ that represents a depth-2 spanning tree rooted at $r$ with congestion at most $K$. 
We will show that $H$ admits a perfect matching $M$: 
For each assignment $c \to x$, if $x$ is not assigned to any other vertex in $C'$, we add $(c, x)$ to $M$; 
otherwise, $x$ is assigned to exactly one other vertex $c' \in C'$, we add $(c, c')$ to $M$. 
The remaining vertices that have not been matched are in $X'$ and $r$ (if it is in $H$). 
These vertices can be matched arbitrarily since they form a clique of even size. 

\smallskip
$(\Leftarrow)$ 
Suppose $M$ is a perfect matching of $H$. We make assignments for a vertex $c \in C'$ as follows (refer to Figure~\ref{fig: stc2 matching}b for an example): 
\begin{itemize}
	\item If $c$ is matched with a vertex $x \in X'$ in $M$, we assign $c \to x$. 
	This assignment is feasible by the construction of $H$, and we also know that $x$ cannot be assigned to any other vertex according to the condition of the matching. 
	\item  If $c$ is matched with another vertex $c' \in C'$, then there exists a vertex $x \in X'$ such that $\neighbor{G}{x} = \braced{r, c, c'}$. 
	In this case, we assign both $c, c'$ to $x$. By the construction of $H$, both assignments are feasible, and $x$ is also not used for assignment to any other vertex in $C'$.   
\end{itemize}
This assignment represents a depth-2 spanning tree rooted at $r$ with congestion at most $K$.
\end{proof}


\subsection{$\problemKDSTC{K}{2}$ for bipartite graphs with all degrees in $C$ equal}
\label{subsec: STC2 flow}

We now describe a polynomial time algorithm for $\problemKDSTC{K}{2}$ restricted to
bipartite graph $G$ when all vertices in $C$ have degree $\alpha$, for some positive integer $\alpha$. 
This will prove Theorem~\ref{thm: STC2 algorithm for regular C}. 
We can assume that $\alpha \le K$, for otherwise the congestion of the leaf edges will exceed $K$.
As before, we focus on finding feasible assignments that map each vertex in $C$ to its neighbor in $X$. 
These assignment represent a spanning tree rooted at $r$ whose congestion of all edges in $E_r$ not exceeding $K$.

We first consider the case when $\alpha = 2$. 
For each $c \in C$, if $\neighbor{G}{c} = \braced{x_1, x_2}$, we can assign $c$ arbitrarily to either $x_1$ or $x_2$, because the congestions of 
both $(r,x_1)$ and $(r,x_2)$ are not affected by either assignment. 

From now on, we assume that $\alpha \ge 3$. 
The idea of the algorithm is to express the assignments for vertices in $C$ via the maximum $s-t$ flow in an auxiliary flow network $F$ that can be
constructed in polynomial time from $G$. 
The graph $F$ includes all edges and vertices of $G$. All the edges are directed from $r$ to $X$ and from $X$ to $C$. 
Additionally, $F$ has a source vertex $s$ and directed edge $(s,r)$, and a sink vertex $t$ with directed edges from vertices $C$ to $t$. 
We use  $\capacity{u}{v}$ to denote the capacity of the edge $(u,v)$. 
The capacities of all edges in $F$ are defined as follows: 

\begin{itemize}
	\item $\capacity{s}{r} = |C|$
	\item For each vertex $x \in X$, $\capacity{r}{x}  =  \floor{\dfrac{K- \degree{G}{x}}{\alpha-2}}$
	\item For each edge $(x,c)$ in $F$ where $x \in X$ and $c \in C$, $\capacity{x}{c} = 1$
	\item For each vertex $c \in C$, $\capacity{c}{t} = 1$
\end{itemize}

We then find, in polynomial time, a maximum $s-t$ flow $f$ in $F$. As we will show,
if $f$ has an $s-t$ flow of value $|C|$, this flow will define a feasible assignments for all vertices in $C$ representing a depth-2 spanning tree rooted at $r$ with congestion at most $K$. 
If the maximum flow value is less than $|C|$, we report failure. 
The following lemma establish the correctness of the reduction: 

\begin{lemma}
\label{lemma: stc2 flow}
There exists feasible assignments for all vertices in $C$ if and only if $F$ has an $s-t$ flow of value $|C|$. 
\end{lemma}

\begin{figure}[ht]
	\begin{center}
	\includegraphics[height = 2.2in]{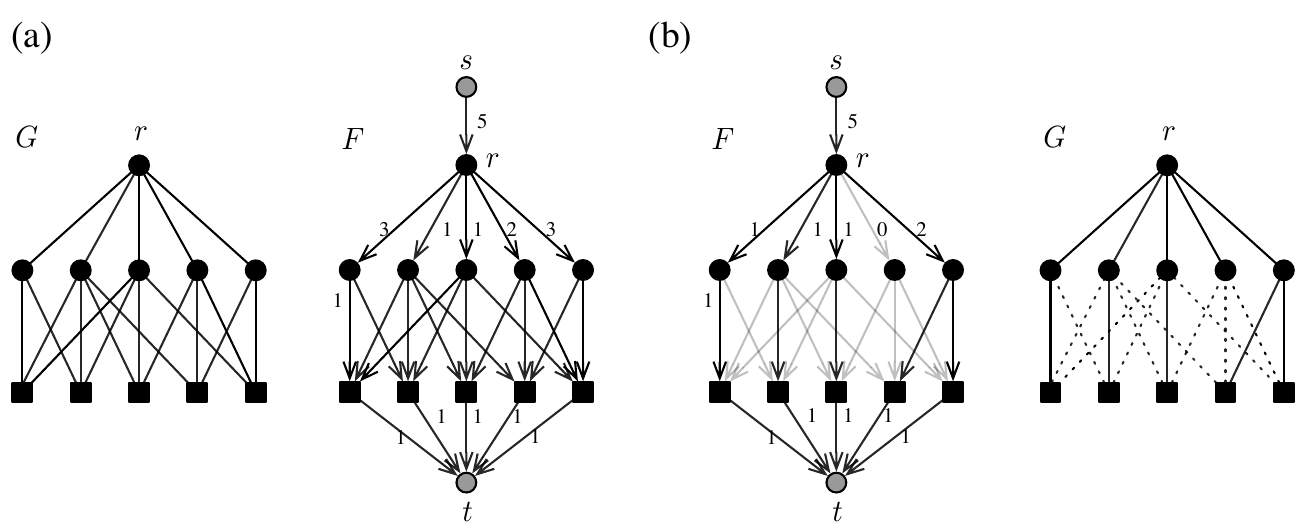}
	\end{center}
	\caption{(a) 
	An example of the auxiliary network $F$ (on the right) constructed from $G$ (on the left). 
	Edges from $X$ to $C$ have capacity $1$, all other edges have capacities as shown. 
	(b) 
	On the left, a maximum flow in $F$. Dark edges have flows with shown values and light edges have no flow.
	On the right, the assignment obtained from this flow.}
	\label{fig: example graph reduction flow}
\end{figure}

\begin{proof}
$(\Rightarrow)$ 
Suppose $F$ has an $s-t$ flow $f$ of value $|C|$. We denote by $\flow{u}{v}$ the flow value on the edge $(u,v)$. 
Since $|C|$ is integral and all capacities are integral, we can assume that flow values of $f$ on all edges are integral.  
Therefore, for each vertex $c \in C$, $\flow{c}{t} = 1$, which implies that there is exactly one vertex $x \in X$ with $\flow{c}{x} = 1$. 
We then assign $c \to x$. 

Next, we need to verify that in the corresponding tree the congestions of the edges in $E_r$ are at most $K$. 
For each vertex $x \in X$, the number of vertices in $C$ that can be assigned to this $x$ is bounded by $\capacity{r}{x}$. 
By Equation~\ref{eqn: cng partial tree}, $\cng{}{r,x} \le \degree{G}{x} + \capacity{r}{x} \, (\alpha - 2) \le K$, which completes the proof of this implication. 

\smallskip
$(\Leftarrow)$ 
Suppose there exist feasible assignments for all vertices in $C$. From this assignment we will construct an $s-t$ flow $f$ for $F$ with value $|C|$.
For each vertex $c \in C$, if $c$ is assigned to $x \in X$, then we define $\flow{x}{c} = 1$ and $\flow{x'}{c} = 0$ for all $x' \in \neighbor{G}{c} \setminus \braced{x}$. 
Next, for each vertex $x \in X$, we define $\flow{r}{x} = n_x$ where $n_x$ is the number of vertices in $C$ that are assigned to $x$. 
Due to the congestion bound on $(r,x)$, we have $n_x \le \frac{K- \degree{G}{x}}{\alpha-2}$. However, since $n_x$ is integral, we have $n_x \le \floor{\frac{K- \degree{G}{x}}{\alpha-2}} = \capacity{r}{x}$. 
Lastly, we let $\flow{c}{t} = 1$ for each $c \in C$ and $\flow{r}{s} = |C|$. 
Clearly, the constructed flow $f$ has value $|C|$, and it satisfies flow conservation and capacity constraints of $F$. 
\end{proof}



%
%
%


\bibliographystyle{plainurl}
\bibliography{stc_references}


\end{document}